\def\eqref#1{equation~\ref{#1}}
\def\1{\bm{1}}
\DeclareMathAlphabet{\mathsfit}{\encodingdefault}{\sfdefault}{m}{sl}
\SetMathAlphabet{\mathsfit}{bold}{\encodingdefault}{\sfdefault}{bx}{n}
\newtheorem{theorem}{Theorem}
\newtheorem{definition}{Definition}
\title{Distributed and Secure Kernel-Based Quantum Machine Learning}
\author{\name Arjhun Swaminathan\textsuperscript{1,2,*} \email arjhun.swaminathan@uni-tuebingen.de 
      \AND
      \name Mete Akgün\textsuperscript{1,2} \email mete.akguen@uni-tuebingen.de \\ \\
      \addr \textsuperscript{1}Medical Data Privacy and Privacy-preserving Machine Learning
(MDPPML), University of Tübingen \\ \textsuperscript{2}Institute for Bioinformatics and Medical Informatics (IBMI), University of Tübingen\\
\textsuperscript{*}Corresponding Author
      }
\begin{document}

\maketitle

\begin{abstract}
Quantum computing promises to revolutionize machine learning, offering significant efficiency gains for tasks such as clustering and distance estimation. Additionally, it provides enhanced security through fundamental principles like the measurement postulate and the no-cloning theorem, enabling secure protocols such as quantum teleportation and quantum key distribution. While advancements in secure quantum machine learning are notable, the development of secure and distributed quantum analogs of kernel-based machine learning techniques remains underexplored.

In this work, we present a novel approach for securely computing three commonly used kernels: the polynomial, radial basis function (RBF), and Laplacian kernels, when data is distributed, using quantum feature maps. Our methodology formalizes a robust framework that leverages quantum teleportation to enable secure and distributed kernel learning. The proposed architecture is validated using IBM’s Qiskit Aer Simulator on various public datasets.
\end{abstract}

\section{Introduction}

Quantum computing is set to revolutionize machine learning (ML) by leveraging its capability to encode high-dimensional data into quantum bits, or qubits. These qubits exist in a superposition of states, enabling quantum data to represent data exponentially more efficiently than classical computing\textemdash data represented using $N$ classical bits can equivalently be represented by $log_2N$ qubits. Further, in addition to superposition, quantum entanglement enables qubits to exhibit strong correlations that persist regardless of spatial separation. These correlations are essential for achieving exponential speedups in specific quantum algorithms \citep{jozsa2003role}. Although practical quantum computers are still in their infancy, various quantum machine learning (QML) techniques have been proposed \citep{lloyd2013quantum, lloyd2014quantum,biamonte2017quantum}.

Notably, quantum computing has exhibited substantial efficiency gains in some computational tasks compared to classical computing \citep{schuld2018supervised,zhao2021compiling}: the estimation of distances and inner products between post-processed $N$-dimensional vectors is achieved in $\mathcal{O}(log(N))$ as compared to $\mathcal{O}(N)$. Similarly, clustering $N$-dimensional vectors into $M$ clusters is expedited to $\mathcal{O}(log(MN))$ using quantum data, compared to $\mathcal{O}(poly(MN))$. 

Quantum computers are not only efficient at handling high-dimensional data, but are inherently secure. This stems from two fundamental principles of quantum mechanics: the measurement postulate and the no-cloning theorem \citep{wootters1982single}. Quantum data collapse upon measurement and cannot be copied without destroying the original data, offering absolutely secure communication. Secure quantum computing is well studied and includes protocols such as quantum teleportation \citep{bennett1993teleporting,bouwmeester1997experimental}, quantum key distribution \citep{bennett2014quantum,bennett1992experimental}, and quantum secure direct communication \citep{long2002theoretically,deng2003two,wang2005quantum,sheng2022one}. 

Additionally, quantum computers utilizing various technologies, such as trapped ions, photons, superconducting circuits, and so forth, are actively being developed, enhancing the practical implementation of these secure communication protocols. 

In parallel, kernel-based ML methods have emerged as effective tools for classification and regression tasks. These methods compute similarities between data points in high-dimensional spaces, making them particularly suited for problems where data is not trivially separable. In contrast to more advanced counterparts, such as deep learning, many kernel-based ML techniques often offer greater interpretability \citep{morocho2019machine,ponte2017kernel} and better accuracy when high-dimensional data is limited, which is often the case in many real-world applications \citep{ding2021random,montesinos2021guide}. Although much of the focus in QML has been on developing quantum or hybrid (quantum-classical) deep learning and neural networks \citep{garg2020advances,kwak2023quantum,dang2018image, basheer2020quantum}, quantum analogs of kernel-based ML techniques are important alternatives, with landmark studies focusing on centralized data \citep{havlivcek2019supervised,schuld2019quantum}. 

However, real-world scenarios often involve distributed data \citep{yu2006privacy,hannemann2023privacy}, in which multiple parties wish to collaboratively train a model while ensuring the privacy of their datasets. Designing QML techniques that work securely in such distributed settings remains a critical challenge, and current research on distributed kernel-based QML methods is still sparse. In particular, \citet{sheng2017distributed} introduced a distributed framework that computes the distance between two data points for classification tasks. Later, \citet{schuld2019quantum} demonstrated that this approach is a specific instance of a more general principle: encoding data into an infinite-dimensional Hilbert space as quantum data is equivalent to mapping data into a higher-dimensional feature space for kernel computation. This insight establishes a fundamental link between quantum encoding and kernel feature maps, revealing that \citet{sheng2017distributed}'s distance computation is the quantum analog of the linear kernel within a broader theoretical context.

Building on the works of \citet{sheng2017distributed} and \citet{schuld2019quantum}, our research investigates this intriguing relationship between quantum feature maps and kernel functions in the context of distributed kernel computations. Specifically, we identify appropriate quantum encodings for commonly used kernels and explore their applications in distributed learning settings. Our approach encodes classical data into quantum states using Random Fourier Features (RFF) \citep{rahimi2007random} and uses a robust architecture for secure and distributed kernel-based learning. This architecture leverages quantum teleportation to ensure data security and employs a semi-honest server to compute kernels and train ML models.

We used publicly available datasets and Qiskit’s Aer Simulator \citep{wille2019ibm} to validate our architecture. Our evaluation demonstrates that the proposed approach ensures data security and achieves performance comparable to centralized classical and quantum methods. However, it is important to acknowledge that achieving identical or superior accuracy to classical methods is not anticipated in our study. This is not only due to widely recognized challenges in quantum computing, such as quantum noise \citep{bharti2022noisy}, approximations in quantum state preparation, and hardware constraints, but also due to the inherent probabilistic nature of quantum states. We make the following three contributions:
\begin{enumerate}
    \item \textbf{Introduction of Quantum Feature Maps:} We introduce quantum feature maps for the polynomial, Radial Basis Function (RBF), and Laplacian kernels and theoretically prove the correctness of these feature maps.
    \item \textbf{Architecture for Secure Kernel Computation:} We formalize a secure architecture to compute linear, polynomial, RBF, and Laplacian kernels using quantum encoding in a federated manner on distributed datasets.
    \item \textbf{Implementation and Validation:} We theoretically validate our architecture, and for empirical validation, we implement it for linear kernels on publicly available datasets using Qiskit's Aer Simulator. Due to the limitations of the simulator and our lack of access to a real quantum computer, we were unable to test other kernels at this stage. 
\end{enumerate}

\section{Background}
\subsection{Kernel-based Machine Learning}\label{sec:kernel}
In machine learning, one typically works with a dataset $\mathcal{X}$ consisting of data points $\{x_1,x_2,\ldots, x_N\}$ $\in \mathcal{X}$, where the goal is to identify patterns to evaluate previously unseen data. Kernel-based techniques employ a similarity measure, called the kernel function, between two inputs to construct models that effectively capture the underlying properties of the data distribution. This kernel function is often an inner product in a feature space, typically of higher dimensionality, where nonlinear relationships between data points become more apparent. Various kernel functions are used in practice, such as linear, RBF, polynomial, and Laplacian kernels. These functions are designed to accommodate diverse data characteristics, making them suitable for various applications. Beyond their many applications, these methods have a rich theoretical foundation, which we briefly explore below.

\begin{definition}
    \textit{Kernel function \citep{aizerman1964theoretical}}
    \label{def:kernel}\\
    A \emph{kernel function} $K$ is a map $K:\mathcal{X}\times \mathcal{X} \to \mathbb{C}$ that satisfies $ K(x,y) = \langle \phi(x), \phi(y) \rangle $, where $\phi:\mathcal{X} \to \mathcal{H} $ is a map from the input space $\mathcal{X}$ to a Hilbert space $(\mathcal{H},\langle \cdot, \cdot \rangle) $.
\end{definition}

One refers to $\phi$ as a feature map. Since for any unitary operator $ U:\mathcal{H} \to \mathcal{H} $, $\langle \phi(x), \phi(y) \rangle = \langle U\phi(x), U\phi(y) \rangle$, a kernel can be related to many different feature maps. However, kernel theory defines a unique Hilbert space associated with a kernel, called the Reproducing Kernel Hilbert Space (RKHS) as follows.

\begin{definition}
    \textit{Reproducing Kernel Hilbert Space (RKHS) \citep{aronszajn1950theory}}
    \label{def:rkhs}\\
    Let $\mathcal{X}$ be an input space and $(\mathcal{R}, \langle \cdot, \cdot \rangle)$ the Hilbert space of functions $f: \mathcal{X} \to \mathbb{C} $. Then $\mathcal{R}$ is an RKHS if there exists a function $K: \mathcal{X} \times \mathcal{X} \to \mathbb{C}$ such that for all $x \in \mathcal{X}$ and $f \in \mathcal{R}$, the following holds:
    \begin{equation*}
        f(x) = \langle f, K(x, \cdot) \rangle. 
    \end{equation*}
\end{definition}

Alternatively, considering an associated feature map, $\phi: \mathcal{X} \to \mathcal{H} $, then $\mathcal{R}$ is the space of functions $f: \mathcal{X} \to \mathbb{C} $ such that for all $x \in \mathcal{X}$ and $\nu \in \mathcal{H}$,
\begin{equation*}
    f(x) = \langle \nu, \phi(x) \rangle_{\mathcal{H}}.
\end{equation*}

Typically, a large family of machine learning problems aims to compute a prediction function $f: \mathcal{X} \to \mathbb{C} $ that takes training or test data and predicts the corresponding label. This is often formulated as the solution to the following optimization problem:
\begin{equation}
\label{functional}
    \min_{f \in \mathcal{R}} \left( \sum_{j=1}^n L(y_j, f(x_j)) + \lambda \Omega (f) \right),
\end{equation}
where $L$ is a loss function, $x_j$ are training data points, $y_j$ the corresponding labels, $\lambda$ a regularization parameter controlling the trade-off between the loss and the complexity of the function $f$, and $\Omega(f)$ a general regularization term that penalizes the complexity of $f$. This prediction function generally lives in an RKHS. The representer theorem \citep{scholkopf2002learning} then states that the solution to this optimization problem can be formulated as follows.
\begin{equation*}
    f^*(x) = \sum_{j=1}^n \alpha_i K(x, x_j),
\end{equation*}
where $K$ is the corresponding kernel in the RKHS. Hence, optimization in an infinite-dimensional space is reduced to a finite-dimensional problem of solving for $\alpha_i$ by computing the kernel values at the training data points.

In summary, kernel functions are fundamental in machine learning, as they enable the transformation of data into higher-dimensional spaces where nontrivial relationships between data can be studied. By leveraging the theoretical framework of RKHS and the representer theorem, kernels facilitate efficient computations for a large class of machine learning models, such as Support Vector Machines (SVM), Gaussian Processes, and Principal Component Analysis (PCA) \citep{shawe2004kernel}. 

\subsection{Random Fourier Features}
Kernel methods often face significant computational challenges, particularly with large datasets. To address this issue, \citet{rahimi2007random} introduced Random Fourier Features (RFF) as an effective approach to estimate kernel functions using finite-dimensional feature maps. RFF enables efficient computation of kernel approximations by leveraging the Fourier transform properties of shift-invariant kernels. One defines RFF as below:

\begin{definition}\textit{Random Fourier Features (RFF) \citep{rahimi2007random}}
    \label{def:rff}\\
    Given a shift-invariant kernel $k(x-y)$ that is the fourier transform of a probability distribution $\chi$, the corresponding lower dimensional feature map $z:\mathbb{R}^D\to \mathbb{R}^d$ defined by
    \begin{equation*}
        z(x):=\left(\sqrt{2} cos(w_1 x + b_1),\ldots,\sqrt{2} cos(w_d x + b_d)\right),
    \end{equation*}
    with $w_i \sim \chi((w_1,\ldots,w_d))$ and $b_i$ are independent samples from the uniform distribution $U[0,2\pi]$, satisfies the following inequality for all $\epsilon$:
    \begin{equation*}
        \mathbb{P}\left(|z^T(x)z(y)-k(x,y)|\geq \epsilon\right)\leq 2 \exp\left(\frac{-D\epsilon^2}{4}\right).
    \end{equation*}
    The feature map $z$ is called an RFF.
\end{definition}

\subsection{Quantum Encoding}\label{sec:encoding}
Quantum encoding techniques are crucial for translating classical data into quantum states. There are various methods to do so, such as basis encoding, angle encoding, amplitude encoding, and Hamiltonian evolution ansatz encoding, each with its own distinct advantages and disadvantages. For example, one such encoding is defined below.

\begin{definition}\textit{Amplitude Encoding \citep{schuld2015introduction}}
    \label{def:encode}\\
    Given classical data $x=(x_1,x_2,\ldots,x_N)^T$, where $N=2^n$, the amplitude encoding of the data is defined as the quantum state:
    \begin{equation}\label{eqn:encoding}
        \ket{\psi(x)}:=\sum_{j=1}^N \frac{x_j}{\lVert x \rVert}\ket{j},
    \end{equation}
    where $\ket{j}$ represents the computational basis states of an $n$-qubit system.
\end{definition}

The computational basis here refers to the set of basis states that span the state space of an $n$-qubit quantum system. These states are represented with $\ket{j}$ where $j\in\{0,1,\ldots, 2^n-1\}$, and are expressed as tensor products of individual qubit states $\ket{0}$ and $\ket{1}$. For example, the computational basis in a $2$-qubit system consists of states:
\begin{equation*}
\{\ket{00},\ket{01},\ket{10},\ket{11}\}.
\end{equation*}

In the context of our work, we will adopt RFF to determine the quantum encodings necessary for the computation of different kernels.

\section{Related Work}

\subsection{Quantum Feature Maps}

Building on the concept of encoding classical data into quantum states, \citet{schuld2019quantum} formalized the idea of a \emph{quantum feature map} by noting that any quantum encoding $x \mapsto \ket{\psi(x)}$ behaves like a feature map and maps to a complex Hilbert space $\mathcal{H}$, hence naturally inducing a kernel. This opens up the possibility of utilizing the rich theory of kernel methods alongside quantum computing. Of particular interest is when two input vectors $x$ and $y$ are embedded in an $N$-dimensional Hilbert space via amplitude encoding. The resulting inner product of the encodings corresponds to the linear kernel.
\begin{equation*}      \braket{\psi(x)}{\psi(y)} = x^T y=  k(x,y). \end{equation*} 
 
Beyond discussing the linear kernel, the work introduced additional quantum feature maps that correspond to other well-known kernels. For instance, the \emph{Copies of Quantum States} map given by
\begin{equation*}
    x=(x_1,\ldots,x_N)\mapsto \ket{\psi(x)} = \left(\sum_j^{N}\frac{x_j}{\lVert x_j \rVert}\ket{j}\right)^{\otimes d}.
\end{equation*}
is associated with the homogeneous polynomial kernel, expressed as
\begin{equation*}
    k(x,y) = (x^Ty)^d,
\end{equation*}
under the inner product $\braket{\psi(x)}{\psi(y)}.$ Similarly, the work proposed the following feature map -
\begin{equation*}
x = (x_1, \ldots, x_N) \mapsto  \psi(x) =\frac{1}{\sqrt{N}} \sum_{j=1}^{N} \left(\cos(x_{2j-2})\ket{2j-2} + \sin(x_{2j-1})\ket{2j-1}\right),
\end{equation*}
which is associated with the cosine kernel, expressed as 
\begin{equation*}
    k(x,y)=\prod_{i=1}^N cos(x_i - y_i),
\end{equation*}
under the inner product
$\braket{\psi(x)}{\psi(y)}.$

As discussed earlier, a large family of ML algorithms optimize the functional in (\ref{functional}) to obtain a prediction function. There are two primary approaches to this optimization in the context of QML: the implicit approach and the explicit approach. The implicit approach uses the representer theorem and computes kernels while offloading the remaining tasks to classical computing, as demonstrated by \citet{rebentrost2014quantum, schuld2019quantum,schuld2021supervised}. The explicit approach uses variational circuits to solve the optimization problem in the infinite-dimensional RKHS, as discussed by \citet{havlivcek2019supervised,schuld2019quantum,cerezo2021variational}. Our work follows the implicit approach in a distributed setting where quantum states are used for kernel computation, and the modeling is offloaded to classical computing.

\subsection{Distributed Secure Quantum Machine Learning (DSQML)}

To the best of our knowledge, only one study, \citet{sheng2017distributed}, has implemented kernel-based techniques using quantum computing within a distributed framework. Their work introduced the Distributed Secure Quantum Machine Learning (DSQML) algorithm, which facilitates distance computation using a polarization-based quantum system. The setup is designed to ensure security against potential eavesdropping or interference during the computation, as any disturbance by an adversary can be detected.

The DSQML framework offers two operational modes: Client-Server and Client-Server-Database. In the Client-Server model, a client with basic quantum technology aims to classify a single data point into one of two clusters, $A$ and $B$ by computing its distance from the reference vectors $v_A$ and $v_B$. The client uses amplitude encoding to quantum encode the data point and employs quantum teleportation to delegate the inner product computation to the server. This can be interpreted as a computation of the linear kernel between the data point and the reference vectors as though in a distributed setting.

In the more complex Client-Server-Database model, the client lacks significant quantum resources and can only perform single-qubit preparation and measurement. In this setup, multiple databases encode their respective data points using amplitude encoding and transfer them to the server via quantum teleportation. The server utilizes a Fredkin gate with the client-prepared ancilla qubit as a control, performs a Hadamard gate, and returns the ancilla qubit to the client. The client then measures the qubit to extract the inner product, which is computed over multiple repetitions.

Although DSQML essentially computes the linear kernel in a distributed setting using amplitude encoding, it does not explicitly acknowledge or leverage the intrinsic relationship between quantum encoding and kernel methods as proposed later by \citet{schuld2019quantum}. Consequently, it overlooked the broader kernel framework that can be leveraged for various supervised and unsupervised machine learning tasks across different types of data, including images, text, and numeric data.

In contrast, our research substantially broadens these initial concepts by facilitating the computation of encoding-induced kernels and other standard kernels such as polynomial, RBF, and Laplacian kernels for data of any dimensionality. This generalization to other widely used kernels is much stronger and is important for future study. Further, our method follows a simple Client-Server model and can easily be adapted by relabeling to a Client-Server-Database model.

\section{Quantum Feature Maps}
Although \citet{schuld2019quantum} pointed out the implicit connection between quantum encoding techniques and feature maps, they devised feature maps only for the linear kernel, the homogeneous polynomial kernel, and the cosine kernel. We extend this by defining the quantum feature maps associated with three widely used kernels in ML: the polynomial, RBF, and Laplacian kernels. 
\subsection{Polynomial Kernel}
Given classical data $x=(x_1,x_2,\ldots,x_N)^T$, we define the following quantum feature map:

\begin{equation}
    x\mapsto \psi(x) = \bigotimes_{j=1}^{\binom{N+d}{d}} \frac{\sqrt{a}\sqrt{d!}}{\sqrt{k_1!k_2!\ldots k_{N+1}!}}x_1^{k_1}\ldots x_N^{k_N}\sqrt{c}^{k_{N+1}}\ket{j-1}, \label{poly}
\end{equation}
where the multi-index $k=(k_1,\ldots,k_{N+1})$ runs over all combinations such that $\sum_{l=1}^{N+1} k_l=d$, and $c=1-a\lVert x \rVert$ if $d \in \mathbb{N}$, or $c=-1-a\lVert x \rVert$ if d $\in 2\mathbb{N}$. \\

\begin{theorem}
    The quantum feature map above is a well-defined quantum state.  
\end{theorem}
\begin{proof}
    To be well defined, we require the map to be normalizable. Consider the map $x \mapsto \psi(x)$. Then, using the multinomial theorem \citep{aizerman1964theoretical,boser1992training}, we have that
    \begin{align*}
        \lVert \psi(x) \rVert &=\sum_{\sum_l k_l=d} \left(\frac{\sqrt{a}\sqrt{d!}}{\sqrt{k_1!k_2!\ldots k_{N+1}!}}\right)^2 (x_1^{k_1})^2\ldots ({x_N^{k_N}})^2 c^{k_{N+1}},
        \\&=(a\lVert x\rVert +c)^d=1.
    \end{align*}
    This completes the proof.
\end{proof}
\begin{theorem}
    The quantum feature map defined above yields the polynomial kernel \citep{scholkopf2002learning}, 
    \begin{equation*}
    K_{poly} = (ax^Ty+c)^d,
\end{equation*}
    under an inner product. 
\end{theorem}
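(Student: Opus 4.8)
The plan is to evaluate the inner product $\langle \psi(x),\psi(y)\rangle$ of the two encoded states directly and then recognize the resulting sum as a multinomial expansion run in reverse. First I would pin down the index set: the amplitudes are labelled by multi-indices $k=(k_1,\ldots,k_{N+1})$ of non-negative integers subject to $\sum_{l=1}^{N+1}k_l=d$, and a stars-and-bars count shows there are exactly $\binom{N+d}{d}$ such tuples, which matches the stated dimension of the register and confirms the map is well defined. I would read the map not as a literal tensor product but as a single $\binom{N+d}{d}$-dimensional amplitude vector, i.e.\ a superposition over the basis states $\ket{j-1}$, where the $j$-th basis state corresponds to the $j$-th admissible multi-index $k$ and carries the coefficient $c_k(x)=\frac{\sqrt{d!}}{\sqrt{k_1!\cdots k_{N+1}!}}\,a^{(d-k_{N+1})/2}\,x_1^{k_1}\cdots x_N^{k_N}\,c^{k_{N+1}/2}$.

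Next I would form $\langle \psi(x),\psi(y)\rangle=\sum_k \overline{c_k(x)}\,c_k(y)$. Taking the data to be real (or absorbing the conjugation into the convention for real amplitudes), each summand multiplies two copies of the same coefficient, so the square roots combine cleanly: the factor $\frac{\sqrt{d!}}{\sqrt{\prod_l k_l!}}$ squares to the multinomial coefficient $\frac{d!}{k_1!\cdots k_{N+1}!}$, the factor $a^{(d-k_{N+1})/2}$ squares to $a^{\,d-k_{N+1}}=a^{\,k_1+\cdots+k_N}$, and $c^{k_{N+1}/2}$ squares to $c^{k_{N+1}}$, while the data contribute $\prod_{i=1}^N (x_i y_i)^{k_i}$. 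This gives
\begin{equation*}
\langle \psi(x),\psi(y)\rangle=\sum_{k_1+\cdots+k_{N+1}=d}\frac{d!}{k_1!\cdots k_{N+1}!}\,c^{k_{N+1}}\prod_{i=1}^N (a\,x_i y_i)^{k_i}.
\end{equation*}
I would then invoke the multinomial theorem in reverse, with the $N+1$ quantities $a x_1 y_1,\ldots,a x_N y_N$ and $c$ as the summands: the displayed expression is precisely the expansion of $\bigl(c+a\sum_{i=1}^N x_i y_i\bigr)^d=(a\,x^Ty+c)^d=K_{poly}$, completing the argument.

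The computation is routine once the labelling is fixed; the step that needs the most care—and the point I would flag explicitly—is the exponent on $a$. For the identity to close, the amplitude must carry $a$ to the power $(k_1+\cdots+k_N)/2=(d-k_{N+1})/2$, so that the two amplitudes together reproduce $a^{\,k_1+\cdots+k_N}$, rather than a single fixed $\sqrt a$ factor; I would make this dependence on $k$ part of the statement. The only remaining subtlety is bookkeeping: verifying that the stars-and-bars count $\binom{N+d}{d}$ of multi-indices equals the register dimension, and that summing $j$ from $1$ to $\binom{N+d}{d}$ is the same as summing over all admissible $k$ with $\sum_{l=1}^{N+1}k_l=d$.
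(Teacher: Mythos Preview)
Your argument is correct and follows exactly the paper's route: compute the inner product term-by-term over the multi-index set $\{k:\sum_l k_l=d\}$, observe that the squared coefficients produce multinomial coefficients, and collapse the sum via the multinomial theorem to $(ax^Ty+c)^d$. Your reading of the map as a superposition (amplitude vector) rather than a literal tensor product is the intended one.

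Your flag on the exponent of $a$ is well taken and in fact sharper than the paper: as written, the paper's amplitude carries a single $\sqrt{a}$ independent of $k$, so squaring yields an overall factor $a$ and the displayed sum evaluates to $a\,(x^Ty+c)^d$, not $(ax^Ty+c)^d$. Your corrected amplitude with $a^{(d-k_{N+1})/2}$ is what is actually needed for the theorem to hold.
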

\begin{proof}
    This follows directly from the multinomial theorem. Let $\phi(x)$ and $\phi(y)$ be two quantum feature maps of classical data $x$ and $y$, defined as above. Then,
    \begin{align*}
        \braket{\phi(x)}{\phi(y)}&=\sum_{\sum_l k_l=d} \left(\frac{\sqrt{a}\sqrt{d!}}{\sqrt{k_1!k_2!\ldots k_{N+1}!}}\right)^2 x_1^{k_1}\ldots x_N^{k_N}y_1^{k_1}\ldots y_N^{k_N}c^{k_{N+1}},  \\
        &= (ax^Ty+c)^d, \\
        &= K_{poly}(x,y).
    \end{align*}
    This completes the proof.
\end{proof}

\subsection{RBF Kernel} 
Using RFF, given classical data $x=(x_1,x_2,\ldots,x_N)^T$, we define the following quantum feature map:
\begin{equation}
x \mapsto  \psi(x) =\frac{1}{\sqrt{D}} \sum_{j=1}^{D} \left(\cos(w_j^T x)\ket{2j-2} + \sin(w_j^T x)\ket{2j-1}\right), \label{rbf}
\end{equation}
where $\lceil \log_2(2D)\rceil$ determines the number of qubits used and the approximation quality, and $w_i$ are independent samples from the normal distribution $\mathcal{N}(0,\sigma^{-2}I)$. 
\\
\begin{theorem}
    The quantum feature map above is a well-defined quantum state.  
\end{theorem}
\begin{proof}
    To be well defined, we require the map to be normalizable. Consider the map $x \mapsto \psi(x)$. Then, we have that 
    \begin{equation*}
        \lVert \psi(x) \rVert =\frac{1}{D}\sum_{j=1}^D cos^2(w_j^Tx)+sin^2(w_j^Tx)=1.
    \end{equation*}
This completes the proof.
\end{proof}
\begin{theorem}\label{theorem2}
    The quantum feature map defined above yields the RBF kernel \citep{Broomhead1988RadialBF}, 
    \begin{equation*}
    K_{RBF}(x,y) = \exp\left({-\frac{\lVert x-y \rVert^2}{2\sigma^2}}\right),
    \end{equation*}
    under an inner product. 
\end{theorem}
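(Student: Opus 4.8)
The plan is to compute the inner product $\braket{\phi(x)}{\phi(y)}$ directly, reduce it to a Random Fourier Features estimator, and then invoke Definition~\ref{def:rff} together with the Fourier-transform (characteristic-function) characterization of the Gaussian distribution. First I would exploit the orthonormality of the computational basis states: since the kets $\ket{2j-2}$ and $\ket{2j-1}$ for $j=1,\dots,D$ are mutually orthonormal, every cross term vanishes and the inner product collapses to
\begin{equation*}
\braket{\phi(x)}{\phi(y)} = \frac{1}{D}\sum_{j=1}^{D}\bigl(\cos(w_j^T x)\cos(w_j^T y) + \sin(w_j^T x)\sin(w_j^T y)\bigr).
\end{equation*}
Applying the identity $\cos A\cos B + \sin A\sin B = \cos(A-B)$ to each summand yields $\braket{\phi(x)}{\phi(y)} = \frac{1}{D}\sum_{j=1}^{D}\cos\bigl(w_j^T(x-y)\bigr)$. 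This is the cosine--sine variant of the RFF estimator; it shares the same expectation $\mathbb{E}_w[\cos(w^T(x-y))]$ as the phase-offset form $\sqrt{2}\cos(w_ix+b_i)$ of Definition~\ref{def:rff}, so a concentration bound of the same type applies.

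Next I would evaluate the expectation of this estimator. Writing $\delta = x-y$, I would note that $\cos(w^T\delta) = \operatorname{Re}\,e^{iw^T\delta}$, so $\mathbb{E}_w[\cos(w^T\delta)] = \operatorname{Re}\,\mathbb{E}_w[e^{iw^T\delta}]$, which is exactly the characteristic function of $w\sim N(0,\sigma^{-2}I)$ evaluated at $\delta$. The characteristic function of a centered Gaussian with covariance $\sigma^{-2}I$ is $\exp\bigl(-\tfrac{1}{2}\sigma^{-2}\lVert\delta\rVert^2\bigr) = \exp\bigl(-\lVert x-y\rVert^2/(2\sigma^2)\bigr)$, which is precisely $K_{RBF}(x,y)$. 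The imaginary (sine) contribution vanishes because the sampling distribution is symmetric about the origin. Hence each term $\cos(w_j^T\delta)$ is an unbiased sample of $K_{RBF}(x,y)$, and the empirical average $\frac{1}{D}\sum_j\cos(w_j^T\delta)$ is the RFF estimator of the RBF kernel.

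Finally I would appeal to the concentration inequality of Definition~\ref{def:rff} to conclude that $\braket{\phi(x)}{\phi(y)}$ concentrates around $K_{RBF}(x,y)$ as $D$ grows, so the feature map yields the RBF kernel under the inner product. I expect the principal point to be one of \emph{interpretation} rather than difficulty: the equality is exact only in expectation (equivalently, asymptotically in $D$), so the cleanest formulation is to identify the inner product with the RFF estimator of Definition~\ref{def:rff}, whose bias is zero and whose fluctuation is bounded by the stated probability inequality. The single genuine computation to verify with care is the characteristic-function step --- confirming that the variance $\sigma^{-2}$ of the sampling distribution is the correct choice that produces the factor $1/(2\sigma^2)$ in the exponent of the target kernel.
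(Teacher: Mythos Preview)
Your proposal is correct and follows essentially the same route as the paper: compute the inner product via orthonormality, collapse it with the cosine difference identity to $\frac{1}{D}\sum_j\cos(w_j^T(x-y))$, and evaluate the expectation through the Gaussian characteristic function (the paper phrases this last step via the moment generating function of the scalar $w_j^T(x-y)\sim N(0,\sigma^{-2}\lVert x-y\rVert^2)$, which is the same computation). Your added remarks on concentration and the ``only in expectation'' caveat are a welcome clarification that the paper leaves implicit.
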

\begin{proof}
    Let $\phi(x)$ and $\phi(y)$ be two quantum feature maps of classical data $x$ and $y$, defined as above. It follows that 
    \begin{align}
        \mathbb{E}[\braket{\phi(x)}{\phi(y)}]&=\frac{1}{D}\sum_{j=1}^D \mathbb{E}\left[cos(w_j^Tx)cos(w_j^Ty) +sin(w_j^Tx)sin(w_j^Ty)\right], \nonumber \\&=\frac{1}{D}\sum_{j=1}^D \mathbb{E}[\cos(w_j^T(x-y))] \label{approximation}.   
    \end{align}
    Using Euler's formula, this can be rewritten as 
    \begin{align}
    \begin{split}
        \label{exp1}
        \mathbb{E}[\braket{\phi(x)}{\phi(y)}] &= \frac{1}{2D}\sum_{j=1}^D\left(\mathbb{E}[\exp(iw_j^T(x-y))] + \mathbb{E}[\exp(-iw_j^T(x-y))]\right).
    \end{split}
    \end{align}
        Since normal distributions remain closed under linear transformations \citep{wackerly2008mathematical},  
    \begin{align*}
        w_j^T(x-y)&=\sum_{k=1}^Nw_{jk}(x_k-y_k) 
         \sim  N\left(0,\frac{1}{\sigma^2}\sum_{k=1}^N(x_k-y_k)^2\right)\sim \mathcal{N}\left(0,\frac{1}{\sigma^{2}}\lVert x-y \rVert^2\right).
    \end{align*}
    Hence $w_j^T(x-y)$ is a normal distribution. Then (\ref{exp1}) can be rewritten as 
    \begin{align*}
    \mathbb{E}[\braket{\phi(x)}{\phi(y)}]=\frac{1}{2D}\sum_{j=1}^D\left(M_{w^T_j (x-y)}(i) + M_{w^T_j (x-y)}(-i)\right),
    \end{align*}
    where $M_Z(t)=\mathbb{E}[\exp\left(tZ\right)]$ is the moment generating function of a random variable $Z$. 
    Hence, since the moment generating function of a normal distribution $Z \sim \mathcal{N}(\mu,\gamma^2)$ is given by $M_Z(t)=\exp\left(t\mu + \frac{1}{2}\gamma^2t^2\right)$, we have
    \begin{align}
        \mathbb{E}[\braket{\phi(x)}{\phi(y)}] &=\frac{1}{2D} \sum_{j=1}^D\left[\exp\left(-\frac{1}{2\sigma^2}\lVert x-y \rVert^2\right) +\exp\left(-\frac{1}{2\sigma^2}\lVert x-y \rVert^2\right)\right],\nonumber \\
        &=\exp\left(-\frac{1}{2\sigma^2}\lVert x-y \rVert^2\right)
        =K_{RBF}(x,y)\label{rbfresult}.
    \end{align}
    This completes the proof.
\end{proof}

\subsection{Laplacian Kernel} 
Using RFF, given classical data $x=(x_1,x_2,\ldots,x_N)^T$, we define the following quantum feature map:
\begin{align}
x \mapsto  \psi(x) =\frac{1}{\sqrt{D}} \sum_{j=1}^{D} \left( \cos(w_j^T x + \alpha_j)\ket{2j-2} + \sin(w_j^T x + \alpha_j)\ket{2j-1} \right),\label{laplacian}
\end{align}
where $\lceil \log_2(2D)\rceil$ determines the number of qubits used and the approximation quality, $w_j$ are independent samples from the Cauchy distribution $\mathcal{C}(0, \alpha^{-1} I)$, and $\alpha_j$ are independent samples from the uniform distribution $\mathcal{U}(0, 2\pi)$.
\\
\begin{theorem}
    The quantum feature map above is a well-defined quantum state.  
\end{theorem}
\begin{proof}
    To be well defined, we require the map to be normalizable. Consider the map $x \mapsto \psi(x)$. Then, we have that 
    \begin{equation*}
        \lVert \psi(x) \rVert =\frac{1}{D}\sum_{j=1}^D cos^2(w_j^Tx+\alpha_j)+sin^2(w_j^Tx+\alpha_j)=1.
    \end{equation*}
This completes the proof.
\end{proof}

\begin{theorem}
    The quantum feature map defined above yields the Laplacian kernel \citep{smola2003kernels}, 
    \begin{equation*}
    K_L(x,y) = \exp\left({-\frac{\|x-y\|_1}{\alpha}}\right),
    \end{equation*}
    under an inner product. 
\end{theorem}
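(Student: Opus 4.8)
The plan is to mirror the structure of the RBF proof (Theorem~\ref{theorem2}) almost step for step, since the two feature maps differ only in the sampling distribution of the $w_j$ and in the presence of the uniform phases $\alpha_j$. First I would expand the inner product $\braket{\phi(x)}{\phi(y)}$, using orthonormality of the computational basis states to collect the cosine--cosine and sine--sine cross terms for each index $j$. Applying the angle-subtraction identity $\cos A\cos B + \sin A\sin B = \cos(A-B)$ with $A = w_j^Tx + \alpha_j$ and $B = w_j^Ty + \alpha_j$, the phases $\alpha_j$ cancel identically, leaving
\begin{equation*}
\braket{\phi(x)}{\phi(y)} = \frac{1}{D}\sum_{j=1}^D \cos\!\bigl(w_j^T(x-y)\bigr),
\end{equation*}
exactly as in the RBF case. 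I would then take the expectation and use Euler's formula to write each $\mathbb{E}[\cos(w_j^T(x-y))]$ as the average of $\mathbb{E}[\exp(\pm i w_j^T(x-y))]$.

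The decisive departure from the RBF argument comes next. The Cauchy distribution has no moment generating function, so I cannot invoke the MGF as was done for the Gaussian case; instead I would use the characteristic function $\varphi_Z(t) = \mathbb{E}[\exp(itZ)]$, which for a Cauchy variable $Z \sim C(0,\gamma)$ equals $\exp(-\gamma|t|)$. The key structural fact is that Cauchy distributions are closed under linear combinations, with the scale parameters adding in absolute value: if the components $w_{jk}$ are independent $C(0,\alpha^{-1})$, then
\begin{equation*}
w_j^T(x-y) = \sum_{k=1}^N w_{jk}(x_k-y_k) \sim C\!\Bigl(0,\, \alpha^{-1}\sum_{k=1}^N |x_k-y_k|\Bigr) = C\bigl(0,\, \alpha^{-1}\lVert x-y\rVert_1\bigr).
\end{equation*}
This is precisely the step that replaces the $\lVert x-y\rVert^2$ of the Gaussian calculation by the $\lVert x-y\rVert_1$ of the Laplacian kernel: the sum of absolute values arises from the additivity of Cauchy scale parameters under scaling, in contrast to the additivity of Gaussian variances.

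Evaluating the characteristic function at $t = \pm 1$ then gives $\mathbb{E}[\exp(\pm i w_j^T(x-y))] = \exp(-\alpha^{-1}\lVert x-y\rVert_1)$ for every $j$, and averaging the $D$ identical terms collapses the sum to
\begin{equation*}
\mathbb{E}[\braket{\phi(x)}{\phi(y)}] = \exp\!\Bigl(-\tfrac{\lVert x-y\rVert_1}{\alpha}\Bigr) = K_L(x,y),
\end{equation*}
which completes the argument. I expect the main obstacle to be one of care rather than depth: confirming the closure property and the exact form of the Cauchy characteristic function under the paper's scale convention $C(0,\alpha^{-1}I)$, and checking that the uniform phases $\alpha_j$ genuinely drop out. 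Those phases are needed in the general RFF construction to guarantee unbiasedness, but here they cancel because each frequency contributes both a cosine and a sine coordinate; it is worth stating explicitly why they remain harmless. Beyond the non-existence of the MGF, which the characteristic-function substitution resolves cleanly, no nontrivial measure-theoretic or concentration difficulty should arise.
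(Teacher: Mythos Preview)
Your proposal is correct and follows essentially the same route as the paper's proof: expand the inner product and reduce to $\mathbb{E}[\cos(w_j^T(x-y))]$, rewrite via Euler's formula, invoke closure of the Cauchy family under linear combinations to identify the law of $w_j^T(x-y)$, and evaluate the Cauchy characteristic function at $t=\pm1$. You are in fact more explicit than the paper on two points it leaves implicit\textemdash the cancellation of the uniform phases $\alpha_j$ via the angle-subtraction identity, and the reason the characteristic function must replace the MGF\textemdash but the argument is otherwise the same.
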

\begin{proof}
     Let $\phi(x)$ and $\phi(y)$ be two quantum feature maps of classical data $x$ and $y$, defined as above. It follows like in Theorem \ref{theorem2} that 
    \begin{align}
    \begin{split}\label{exp2}
       \mathbb{E}[\braket{\phi(x)}{\phi(y)}]  &= \frac{1}{2D}\sum_{j=1}^D \left(\mathbb{E}[\exp(iw_j^T(x-y))] + \mathbb{E}[\exp(-iw_j^T(x-y))]\right).
    \end{split}
    \end{align}
    Since Cauchy distributions remain closed under linear transformations \citep{nolan2012stable},  
    \begin{equation*}
        w_j^T(x-y)=\sum_{k=1}^N w_{jk}(x_k-y_k)\sim \mathcal{C}\left(0,\frac{1}{\alpha}\sum_{k=1}^N |x_k-y_k|\right) \sim \mathcal{C}\left(0,\frac{\lVert x-y \rVert_1}{\alpha}\right).
    \end{equation*}
    Hence, $w_j^T(x-y)$ is a Cauchy distribution. We rewrite (\ref{exp2}) as
    \begin{align*}
    \mathbb{E}[\braket{\phi(x)}{\phi(y)}] &= \frac{1}{2D}\sum_{j=1}^D \left(\phi_{w_j^T(x-y)}(1) + \phi_{w_j^T(x-y)}(-1)\right),
\end{align*}
    where $\phi_Z(t)=\mathbb{E}[\exp\left(itZ\right)]$ is the characteristic function of a random variable $Z$. 
    Hence, since the characteristic function of a Cauchy distribution $Z \sim \mathcal{C}(\mu,\gamma)$ is given by $\phi_Z(t)=\exp\left({it\mu - \gamma |t|}\right)$, we have 
    \begin{align}
        \mathbb{E}[\braket{\phi(x)}{\phi(y)}] &=\frac{1}{2D}\sum_{j=1}^D\left[\exp\left({-\frac{\lVert x-y  \rVert_1}{\alpha}}\right) +\exp\left({-\frac{\lVert x-y  \rVert_1}{\alpha}}\right)\right],\nonumber\\
        &=\exp\left({-\frac{\lVert x-y  \rVert_1}{\alpha}}\right)
        =K_{L}(x,y).\label{lapproof}
    \end{align}
    This completes the proof.
\end{proof}

\section{Computational Complexity}

\subsection{Classical Setting}
In classical computation, kernel evaluation involves performing pairwise operations on $N$-dimensional vectors. For example:
\begin{itemize}
    \item The polynomial kernel requires computing the inner product - $\mathcal{O}(N)$ - and raising the result to the power $d$ - ($\mathcal{O}(d)$) - leading to a total complexity of $\mathcal{O}(N + d)$.
    \item The RBF kernel requires computing the squared norm of the difference between two vectors - $\mathcal{O}(N)$ - and applying the exponential function - $\mathcal{O}(1)$ - leading to a total complexity of $\mathcal{O}(N)$.
    \item The Laplacian kernel involves the $L_1$-norm computation ($\mathcal{O}(N)$) - and applying the exponential function - $\mathcal{O}(1)$ - leading to a total complexity of $\mathcal{O}(N)$.
\end{itemize}

This indicates that classical methods face challenges when $N$ and $d$ are large.

\subsection{Quantum Setting}
The quantum approach involves two main steps: (1) state preparation and (2) computing inner products in the corresponding Hilbert space.

\paragraph{State Preparation:}
Preparing the quantum feature map for an $N$-dimensional vector scales with $\mathcal{O}(N)$. This step is a bottleneck in the quantum pipeline. However, using quantum feature maps offers implicit security guarantees through the no-cloning theorem. Secure classical systems require additional overhead, such as homomorphic encryption or secure multi-party computation, to achieve comparable privacy, which can be computationally more expensive \citep{fan2012somewhat}.

\paragraph{Inner Product Computation:}
Once states are prepared, computing the inner product scales logarithmically with the dimension of the computational basis, $\bar{D}$, which determines the number of qubits required. Specifically, the number of qubits used is $\lceil \log_2 \bar{D} \rceil$, and the computational complexity is therefore $\mathcal{O}(\lceil \log_2 \bar{D} \rceil)$ for one shot. The dimension $\bar{D}$ depends on the kernel being computed:
\begin{itemize}
    \item For polynomial kernels: $\bar{D} = \binom{N+d}{d}$, as seen in (\ref{poly}). Using Stirling’s approximation \citep{stirling1730methodus}, the complexity can be approximated as $\mathcal{O}(d \log N)$ for large $N$ and $d$. In Appendix \ref{sec:appendixb1}, we show that to achieve an additive approximation error of $\epsilon$, one must use $M=\mathcal{O}(1/\epsilon^2)$ number of shots, and hence the overall complexity becomes $\mathcal{O}((d \log N)\cdot M)=\mathcal{O}((d \log N)/ \epsilon^2)$.

    \item For RBF and Laplacian kernels: $\bar{D} = 2D$, where $D$ is the number of random Fourier features used to approximate the kernel. The number of qubits required in this case scales as $\lceil \log_2 (2D) \rceil$, leading to a complexity of $\mathcal{O}(\lceil \log_2 (2D) \rceil)$. In Appendices \ref{sec:appendixb2} and \ref{sec:appendixb3}, we show that to achieve an additive approximation error of $\epsilon$ with high probability, one must choose $D = \mathcal{O}(1/\epsilon^2)$ and use $M = \mathcal{O}(1/\epsilon^2)$ shots. Consequently, the overall computational complexity for the estimation of the inner product in these cases becomes $\mathcal{O}(\lceil \log_2 (2D) \rceil \cdot M)=\mathcal{O}(\lceil \log_2 (2/\epsilon^2) \rceil /\epsilon^2)$.
\end{itemize}

\section{Distributed Secure Computation of Kernels}
\subsection{Architecture}
Our architecture builds on foundational concepts in quantum computing, including quantum teleportation \citep{bennett1993teleporting} and Fredkin gates (controlled-SWAP) \citep{fredkin1982conservative}, to enable secure and distributed kernel computation. While prior work in the field \citep{sheng2017distributed} explored similar ideas within a single-qubit/single-client framework, we extend and generalize these principles to an $n$-qubit system, $k$-client system, and thus are capable of handling high-dimensional data in diverse settings.

Our architecture comprises multiple clients, a central server, and a helper entity. The clients hold sensitive data from which they want to learn privately and collaboratively. The central server is tasked with computing the kernel securely and privately. The helper prepares entangled quantum states to facilitate quantum communication. All entities in this setup are capable of performing the necessary quantum operations. The architecture is depicted in Figure \ref{fig:arc}.

\begin{figure}[ht]
    \centering
    \includegraphics[width=0.5\linewidth, clip]{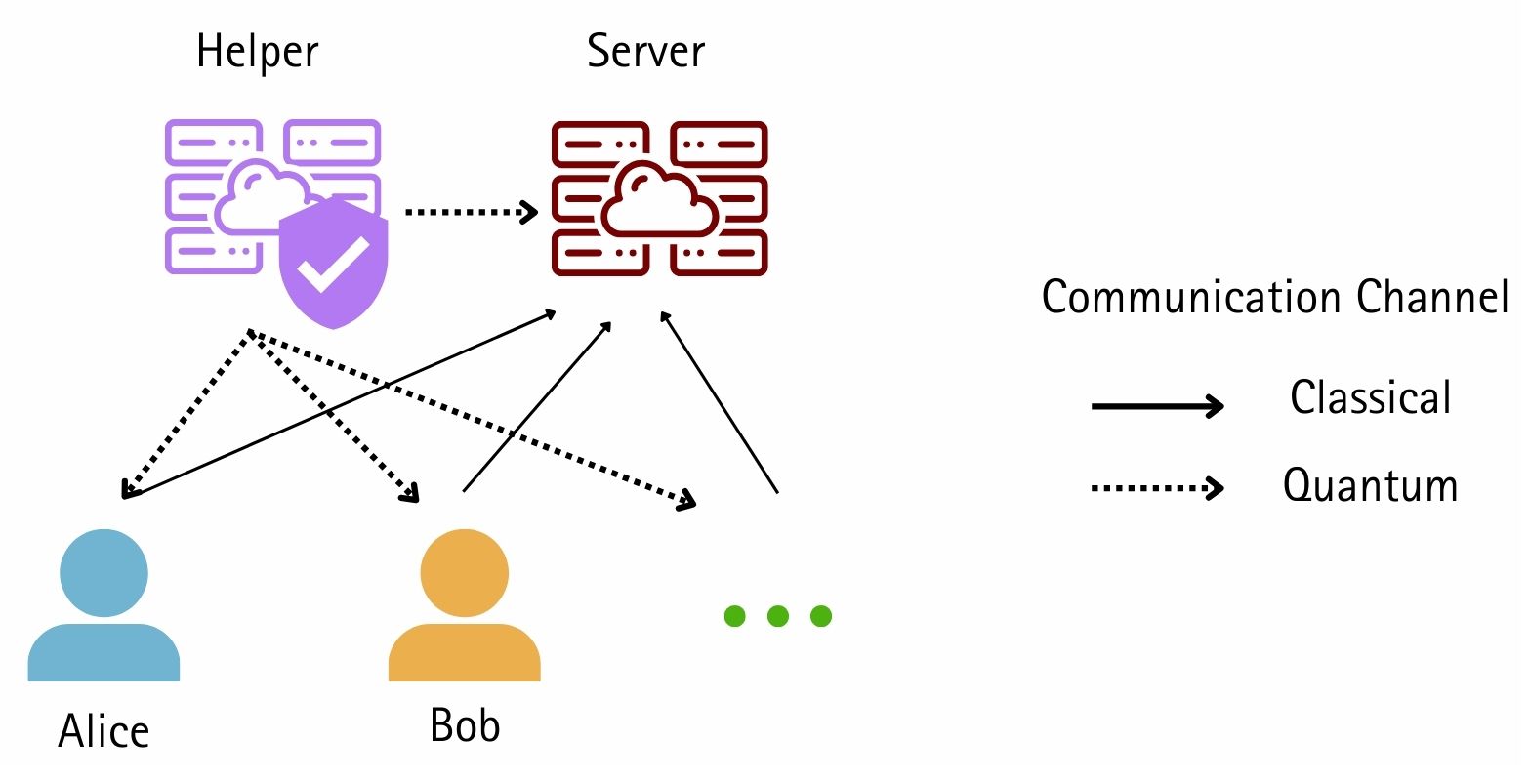}
    \caption{Visualization of our architecture consisting of a helper, a server, and multiple clients.}
    \label{fig:arc}
\end{figure}

We employ an infrastructure in which clients are initially provided with their shared seeds securely, using established cryptographic primitives. The helper party ensures the fair distribution of seeds, adhering to standard privacy-preserving protocols. 

\subsection{Protocol Description}
Without loss of generality, we describe our protocol with two participants. Our method naturally extends to any number of participants. To begin the protocol, Alice and Bob declare the size of their classical data in bits, denoted by $N$. The helper entity then computes the number of qubits, $n$, needed to encode the data for a single participant based on the chosen encoding technique. The protocol is established within a total system of $(6n+1)$ qubits.

The protocol's circuit diagram is detailed in Figure \ref{img:circuit} below. The correctness of the protocol is theoretically shown in Appendix \ref{sec:appendixa}.

\begin{figure}[ht]
    \centering
    \includegraphics[width=0.6\linewidth, clip]{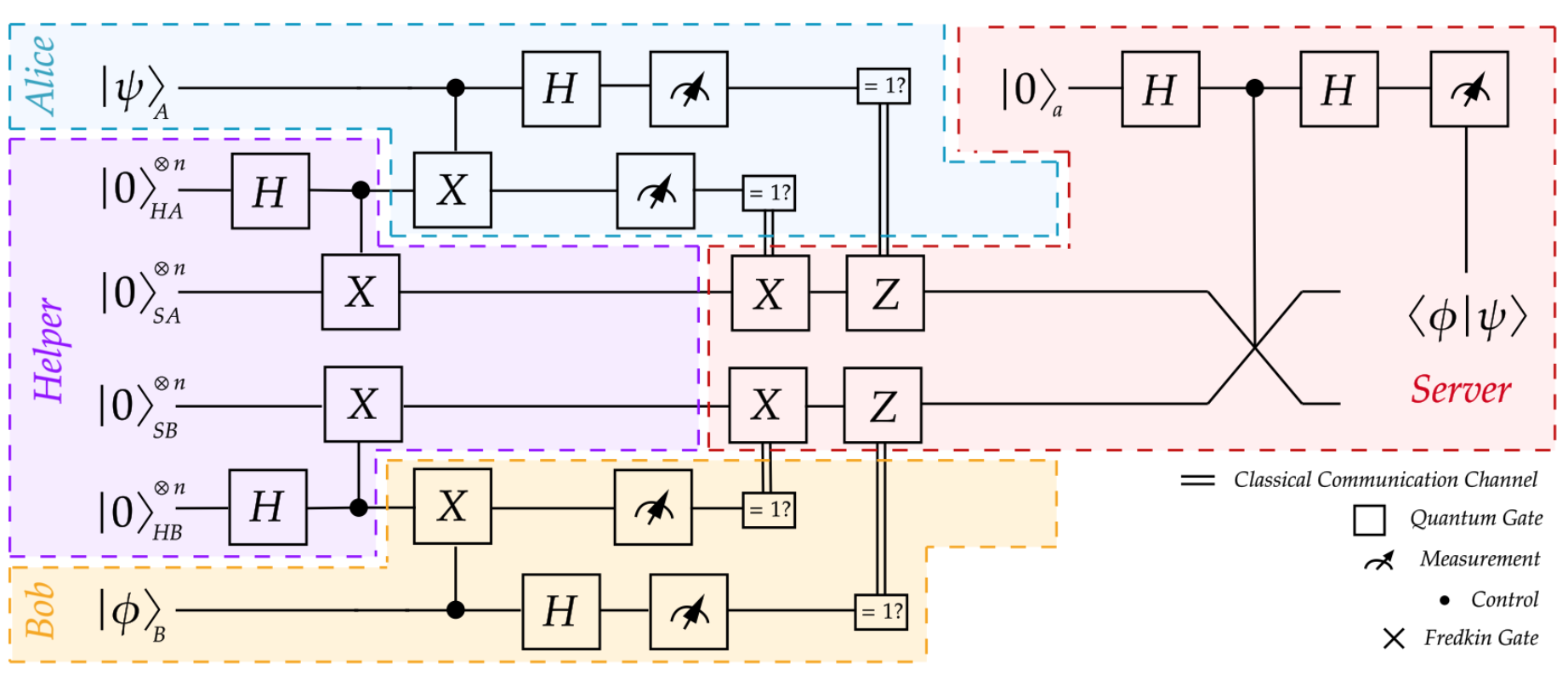}
    %\vskip -0.1in
    \caption{Quantum circuit diagram associated with our secure and distributed quantum-based kernel computation architecture.}
    %\vskip -0.1in
    \label{img:circuit}
\end{figure}

\subsubsection{Helper: Quantum State Preparation for Teleportation} 
The helper generates $2n$ Bell states, which are maximally entangled two-qubit states, to facilitate quantum teleportation. The helper begins by distributing the qubits between Alice, Bob, and the server as follows:
\begin{enumerate}
    \item In the first set of $n$ Bell states, one qubit from each entangled pair represented by $\ket{0}_{HA}$ is sent to Alice, and the other represented by $\ket{0}_{SA}$ to the server.
    \item In the remaining $n$ Bell states, one qubit from each entangled pair represented by $\ket{0}_{HB}$ is sent to Bob, and the other represented by $\ket{0}_{SB}$ to the server.
\end{enumerate}
This enables the quantum teleportation of Alice's and Bob's encoded data to the server for secure computation.
\subsubsection{Clients: Data Encoding and Measurement}
Alice and Bob determine the encoding of their data represented by $
\ket{\psi}_A$ and $\ket{\phi}_B$ respectively, with multiple encodings for every data point based on the required model accuracy. The encoding sequence is derived from the initial shared seed. Subsequently, Alice and Bob execute the following steps:
\begin{enumerate}
    \item Apply a Controlled-X gate to the qubits they received from the helper using their original quantum data as control.
    \item Apply a Hadamard gate to their original data.
    \item Measure their data and the received qubits in the computational basis.
    \item Communicate the results to the server through an encrypted classical communication channel.
\end{enumerate}
 The server applies the appropriate X and Z gates to adjust the qubits it holds.

\subsubsection{Server: Inner Product Measurement}
The server then executes a standard swap test \citep{barenco1997stabilization}. It prepares an ancilla qubit in the zero state, applies a Hadamard gate, and uses it to conditionally swap the two sets of qubits received from Alice and Bob. After reverting the ancilla qubit with another Hadamard and measuring it, the output helps determine the required inner product \citep{buhrman2001quantum}. This measurement process is repeated $p$ times to enhance accuracy.

\subsection{Security of Protocol}

 In our proposed adversarial model, clients, including Alice and Bob, as well as the server, are semi-honest. They adhere to the defined protocol, yet may attempt to infer additional information from the data they handle. A helper entity, deemed a semi-honest third-party, guarantees the integrity of the quantum states used in communication, similar to protocols implementing secure multi-party computation (SMPC) \citep{yao1982protocols}. The server is explicitly characterized as non-colluding with the clients, consistent with established norms in distributed and federated architectures \citep{swaminathan2024pp,hannemann2024private}.

In our setup, each client processes exclusively their own data and cannot access information from other clients, effectively mitigating the risk of adversarial clients learning the data from honest input parties. The non-colluding server does not know the series of encodings applied to the original data and hence cannot reconstruct the original classical data from the quantum data it receives since it does not know how to measure it. It only learns the kernel matrix reflecting similarities between participants' data. However, since the labels are obfuscated and are irrelevant to model training, the server gains no knowledge beyond the similarity distribution pertaining to obscured labels. Further, following the same argument, we note that in the case of the presence of colluding malicious clients and a non-colluding malicious server, the non-adversarial clients' data remain private. However, malicious participants can affect the accuracy of the model.

An adversarial third-party attempting to eavesdrop on the quantum data would face significant challenges due to the no-cloning theorem \citep{wootters1982single}, which prohibits the duplication of quantum information without destroying the original information. In the event of interception, the adversarial entity would need to generate and transmit its own quantum data to the server. This can be effectively detected if clients and the server periodically exchange predetermined random quantum states, allowing the server to check for any discrepancies indicative of interference \citep{sheng2017distributed}. Additionally, the utility of intercepted data is limited for the third-party as the encoding of data for transmission is randomized and only known to the clients through the pre-shared seed.

\section{Experimental Evaluation}
All the proof-of-concept experiments in our evaluation were carried out using classical computing resources in a High-Performance Computing (HPC) cluster. Each node within this HPC environment was equipped with an Intel XEON CPU E5-2650 v4, complemented by 256 GB of memory and 2 TB of SSD storage capacity. We used the Qiskit Aer Simulator to run the program offline due to limited access to IBM's quantum resources. As a result, we were restricted to simulating only 31 qubits in our environment. Note that we do not report any timings since the experiments are run on a simulator.

Our experiments focused on computing the linear kernel. Given the limitation of simulating only $31$ qubits, which confines us to $2^7$ features, we adopted this approach and assigned $n=7$ qubits to each party in our distributed setup. While implementing other kernels, such as encoding-induced kernels, RBF kernels, polynomial kernels, and Laplacian kernels, would require more qubits than available, our primary goal is to validate the architecture rather than exhaustively test every encoding. Since the validity of these encodings has already been theoretically established, our focus is on demonstrating that our architecture functions as expected within this framework.

In addition, we tested our methodology in a two-party configuration. This can be easily expanded, as the data can be redistributed to additional participants while maintaining consistent results. Due to the constraints on qubit simulation, a two-party configuration is employed, allocating $14$ qubits for the two data providers, $14$ for the helper, and $1$ for the server.

\subsection{Accuracy Analysis}
We present a comparative analysis of our distributed quantum kernel learning setup against centralized quantum kernel computation and centralized classical kernel computation. Centralized quantum kernel computation only performs the swap test and does not constitute quantum teleportation. The datasets used for this analysis are widely used and publicly available. These include the Wine dataset (178 samples, 13 features) \citep{asuncion2007uci}, the Parkinson’s disease dataset (197 samples, 23 features) \citep{sakar2019comparative}, and the Framingham Heart Study dataset (4238 samples, 15 features) \citep{ashish_bhardwaj_2022}. Kernel-based training was performed using SVM for all datasets, and PCA was applied to the binary datasets (Parkinson's and Framingham Heart Study) to reduce dimensionality. After applying PCA, SVM was used on the transformed data to obtain accuracy metrics. All SVM training and evaluation were performed using stratified 5-fold cross-validation to ensure unbiased accuracy metrics. The accuracies of the different models on the datasets are summarized in Table \ref{table:accuracy} below. 

\begin{table*}[h]
\centering
\caption{Comparison of accuracies across different methods and datasets.}
\begin{tabular}{|c|c|ccc|}
\hline
\multirow{2}{*}{\begin{tabular}[c]{@{}c@{}}\vspace{-0.2cm} \\  \textbf{Dataset} \\ (\small{Samples $\times$ Features})\end{tabular}} & \multirow{2}{*}{\begin{tabular}[c]{@{}c@{}}\\\textbf{Method} \\ \end{tabular}} & \multicolumn{3}{c|}{\textbf{Accuracy}}  \\ \cline{3-5}
&  & \multicolumn{1}{c|}{\begin{tabular}[c]{@{}c@{}}Centralised\\ Classical\end{tabular}} & \multicolumn{1}{c|}{\begin{tabular}[c]{@{}c@{}}Centralised\\ Quantum\end{tabular}} & \begin{tabular}[c]{@{}c@{}}Distributed\\ Quantum\end{tabular} \\ \hline   
\begin{tabular}[c]{@{}c@{}}Wine \small{$(178\times 13)$}\end{tabular}                                           & kernel-SVM              & \multicolumn{1}{c|}{0.9860 $\pm$ 0.0172}                                                          & \multicolumn{1}{c|}{0.8805}                                                             & 0.8874 $\pm$ 0.0259                                                        \\ \hline
\multirow{2}{*}{\begin{tabular}[c]{@{}c@{}}Parkinsons \\ \small{$(197 \times 23)$}\end{tabular}}                    & kernel-SVM              & \multicolumn{1}{c|}{0.8196 $\pm$ 0.0644}                                                          & \multicolumn{1}{c|}{0.7875}                                                        & 0.7983 $\pm$ 0.0798  \\ \cline{2-5}
& kernel-PCA & \multicolumn{1}{c|}{0.7872 $\pm$ 0.0716}  & \multicolumn{1}{c|}{0.7451}  & 0.7660 $\pm$ 0.0744\\ \hline
\multirow{2}{*}{\begin{tabular}[c]{@{}c@{}}Framingham Heart \\ Study \small{$(4238\times 15)$}\end{tabular}}         & kernel-SVM              & \multicolumn{1}{c|}{0.6788 $\pm$ 0.0108}                                                          & \multicolumn{1}{c|}{0.6308}                                                        & 0.6340 $\pm$ 0.0143 \\ \cline{2-5}
& kernel-PCA & \multicolumn{1}{c|}{0.6788 $\pm$ 0.0095} & \multicolumn{1}{c|}{0.6249}  & 0.6422 $\pm$ 0.0092 \\ \hline
\end{tabular} 
\label{table:accuracy}
\end{table*}

As expected, centralized classical methods generally achieve the highest accuracy, serving as a baseline. Centralized quantum methods show competitive performance, although slightly lower than their classical counterparts, due to the inherent characteristics of quantum data and quantum simulators. Our distributed quantum architecture exhibits comparable but not the same accuracy as the centralized quantum architecture because of the complexity introduced by additional gates in the quantum circuit. All experiments were conducted with 1024 shots of the quantum circuit to ensure reliable accuracy. Here, shots refers to the number of times, $p$, a circuit is repeated.

\subsection{Effect of Noise}

Quantum computing is susceptible to various types of errors due to environmental interactions and imperfections in quantum gate implementations. Our objective is to evaluate the performance of distributed kernel-based QML under different noise conditions on Qiskit and compare it with a classical SVM. We employed three noise models: 

\paragraph{No Noise:} This model assumes an ideal environment without noise. It serves as a baseline for evaluating the performance of our protocol in the absence of errors.

\paragraph{Noise Level 1:} This model introduces a depolarizing error with an error rate of 0.1\% for single-qubit gates and two-qubit gates. The depolarizing error is a type of quantum error in which a qubit with a certain probability is replaced by a completely mixed state, losing all of its original information. 

\paragraph{Noise Level 2:} This model simulates a more challenging environment with a depolarizing error rate of 1\%. 

Our results reported in Figure \ref{graph} show that increasing noise had a negative impact on model performance.

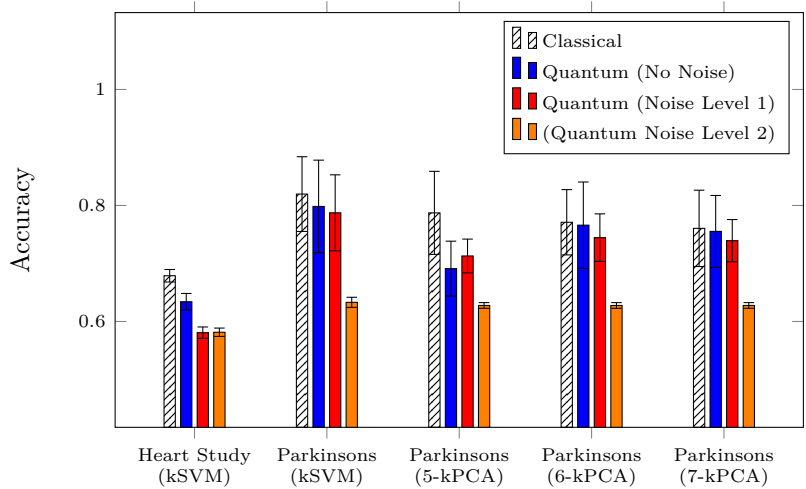
\begin{figure}[h]
\centering
\begin{tikzpicture}
    \begin{axis}[
        ybar,
        bar width=.15cm,
        width=0.65\linewidth,
        height=.43\textwidth,
        enlargelimits=0.15,
        legend style={font=\scriptsize},
        legend cell align={left},
        ylabel={Accuracy},
        symbolic x coords={
            {Heart Study\\kSVM},
            {Parkinsons\\kSVM},
            {Parkinsons\\$5$-kPCA},
            {Parkinsons\\$6$-kPCA},
            {Parkinsons\\$7$-kPCA}
        },
        xtick=data,
        ytick={0.2, 0.4, 0.6, 0.8, 1.0},
        x tick label style={font=\scriptsize, align=center},
        y tick label style={font=\scriptsize},
        ymin=0.5,
        ymax=1.05,
        ]
        \addplot[pattern=north east lines, pattern color=black, error bars/.cd, y dir=both, y explicit] coordinates {
            ({Heart Study\\kSVM},0.6788) +- (0,0.0108)
            ({Parkinsons\\kSVM},0.8196) +- (0,0.0644)
            ({Parkinsons\\$5$-kPCA},0.7872) +- (0,0.0716)
            ({Parkinsons\\$6$-kPCA},0.7710) +- (0,0.0563)
            ({Parkinsons\\$7$-kPCA},0.7605) +- (0,0.0658)
        };
        
        \addplot[fill=blue, error bars/.cd, y dir=both, y explicit] coordinates {
            ({Heart Study\\kSVM},0.6340) +- (0,0.0143)
            ({Parkinsons\\kSVM},0.7983) +- (0,0.0798)
            ({Parkinsons\\$5$-kPCA},0.6911) +- (0,0.0473)
            ({Parkinsons\\$6$-kPCA},0.7660) +- (0,0.0744)
            ({Parkinsons\\$7$-kPCA},0.7553) +- (0,0.0619)
        };
        
        \addplot[fill=red, error bars/.cd, y dir=both, y explicit] coordinates {
            ({Heart Study\\kSVM},0.5807) +- (0,0.0098)
            ({Parkinsons\\kSVM},0.7873) +- (0,0.0655)
            ({Parkinsons\\$5$-kPCA},0.7129) +- (0,0.0291)
            ({Parkinsons\\$6$-kPCA},0.7445) +- (0,0.041)
            ({Parkinsons\\$7$-kPCA},0.7393) +- (0,0.0364)
        };
        
        \addplot[fill=orange, error bars/.cd, y dir=both, y explicit] coordinates {
            ({Heart Study\\kSVM},0.5814) +- (0,0.0072)
            ({Parkinsons\\kSVM},0.6330) +- (0,0.0087)
            ({Parkinsons\\$5$-kPCA},0.6276) +- (0,0.0049)
            ({Parkinsons\\$6$-kPCA},0.6276) +- (0,0.0049)
            ({Parkinsons\\$7$-kPCA},0.6276) +- (0,0.0049)
        };

        \legend{\scriptsize{Classical},\scriptsize{Quantum (No Noise)}, \scriptsize{Quantum (Noise Level 1)},\scriptsize{(Quantum Noise Level 2)}}
    \end{axis}
\end{tikzpicture}
\caption{Comparison of accuracy scores across different noise levels. The baseline includes centralized classical kernel computation and our distributed quantum kernel computation with no noise. We incrementally introduce noise, using depolarizing error at Level 1 and Level 2, to evaluate and report the corresponding accuracy loss.}
\label{graph}

\end{figure}

\subsection{Effect of Shots}
Here, we detail the impact of varying the number of shots used in Qiskit to repeat a quantum circuit on the performance of our proposed algorithm. We used a subset of the Digits dataset containing 100 samples \citep{pedregosa2011scikit}. The objective was to classify these samples into 10 labels (0-9) and to evaluate the classification accuracy using linear kernel-based SVM.

We varied the number of shots, specifically using 128, 256, 512, and 1024 shots, to observe the effect on the classification accuracy. The results, depicted in Figure \ref{linegraph}, indicate improved performance with an increased number of shots.

\begin{figure}[h]
\centering
\begin{tikzpicture}
\begin{axis}[
    xlabel={Number of Shots},
    ylabel={Accuracy},
    xmin=100, xmax=1100,
    ymin=0.7, ymax=1,
    xtick={128, 256, 512, 1024},
    ytick={0.7, 0.75, 0.8, 0.85, 0.88, 0.9, 0.95, 1.00},
    ymajorgrids=true,
    grid style=dashed,
    scale only axis,
    width=0.55\linewidth,
    height=0.25\textwidth, 
    cycle list name=color list,
    nodes near coords,
    every node near coord/.append style={font=\scriptsize, anchor=west, rotate=45},
    y tick label style={font=\scriptsize},
    x tick label style={font=\scriptsize, align=center}
]

\addplot[
    color=blue,
    mark=square,
    thick,
    nodes near coords
    ]
    coordinates {
    (128,0.71)(256,0.71)(512,0.74)(1024,0.83)
    };

\draw[red, thick, dotted] (axis cs:100,0.8799) -- (axis cs:1100,0.8799);
\node at (axis cs:580,0.848) [anchor=south, font=\scriptsize, color=red] {Centralized Classical SVM};

\end{axis}
\end{tikzpicture}
\caption{Accuracy of linear kernel-based SVM on a subset of the Digits dataset featuring 100 samples and 10 labels, compared against the number of shots run by the simulator.}
\label{linegraph}
\end{figure}
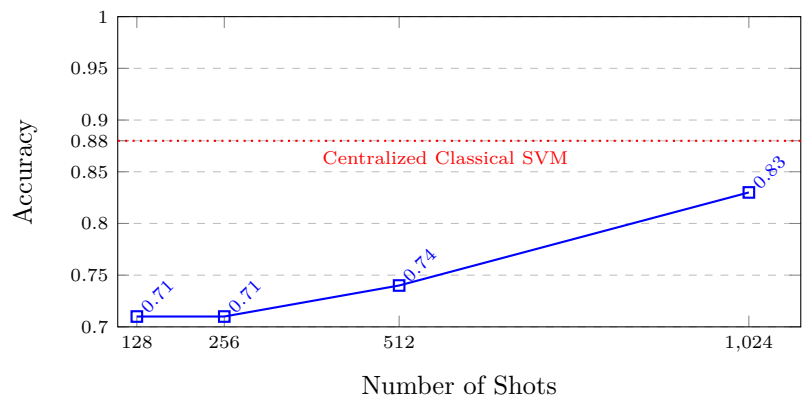

\section{Conclusion and Future Work}

In this study, we introduced a novel kernel-based QML algorithm that operates within a distributed and secure framework. Using the implicit connection between quantum encoding and kernel computation, our method supports the calculation of encoding-induced and traditional kernels. To that end, we extended the existing body of knowledge by introducing three novel quantum feature maps designed to compute the polynomial, RBF, and Laplacian kernels, building upon previous studies that primarily focused on linear and homogeneous polynomial kernels. Furthermore, we have theoretically validated that the proposed quantum feature maps help to compute the concerned kernels. 

Using a hybrid quantum-classical architecture, our approach functions in a distributed environment where a central server assists data providers in processing their data collaboratively, akin to a centralized model. This setup primarily addresses the case of a semi-honest scenario\textemdash a common consideration in studies involving distributed architectures. We have demonstrated that our proposed framework upholds security against semi-honest parties as well as external eavesdroppers. 

The application of our architecture to compute the linear kernel for publicly available datasets using Qiskit's Aer Simulator validates our distributed framework, yielding accuracies comparable to those achieved in centralized classical and quantum frameworks. 

In the future, our aim is to adapt our methodology to scenarios that involve actively malicious entities. Additionally, given the rich potential of kernel theory, further theoretical and practical exploration of quantum feature maps represents a promising direction for future research in the field.

\section*{Supplementary information}
Our code and results are available at the following URL: \url{https://github.com/mdppml/distributed-secure-kernel-based-QML}.

\section*{Acknowledgments}

This research was supported by the German Federal Ministry of Education and Research (BMBF) under project number 01ZZ2010 and partially funded through grant 01ZZ2316D (PrivateAIM). The authors acknowledge the usage of the Training Center for Machine Learning (TCML) cluster at the University of Tübingen.

\bibliography{main}

\begin{thebibliography}{57}
\providecommand{\natexlab}[1]{#1}
\providecommand{\url}[1]{\texttt{#1}}
\expandafter\ifx\csname urlstyle\endcsname\relax
  \providecommand{\doi}[1]{doi: #1}\else
  \providecommand{\doi}{doi: \begingroup \urlstyle{rm}\Url}\fi

\bibitem[Aizerman(1964)]{aizerman1964theoretical}
A~Aizerman.
\newblock Theoretical foundations of the potential function method in pattern recognition learning.
\newblock \emph{Automation and remote control}, 25:\penalty0 821--837, 1964.

\bibitem[Aronszajn(1950)]{aronszajn1950theory}
Nachman Aronszajn.
\newblock Theory of reproducing kernels.
\newblock \emph{Transactions of the American mathematical society}, 68\penalty0 (3):\penalty0 337--404, 1950.

\bibitem[Asuncion et~al.(2007)Asuncion, Newman, et~al.]{asuncion2007uci}
Arthur Asuncion, David Newman, et~al.
\newblock Uci machine learning repository, 2007.

\bibitem[Barenco et~al.(1997)Barenco, Berthiaume, Deutsch, Ekert, Jozsa, and Macchiavello]{barenco1997stabilization}
Adriano Barenco, Andre Berthiaume, David Deutsch, Artur Ekert, Richard Jozsa, and Chiara Macchiavello.
\newblock Stabilization of quantum computations by symmetrization.
\newblock \emph{SIAM Journal on Computing}, 26\penalty0 (5):\penalty0 1541--1557, 1997.

\bibitem[Basheer et~al.(2020)Basheer, Afham, and Goyal]{basheer2020quantum}
Afrad Basheer, A~Afham, and Sandeep~K Goyal.
\newblock Quantum $ k $-nearest neighbors algorithm.
\newblock \emph{arXiv preprint arXiv:2003.09187}, 2020.

\bibitem[Bennett \& Brassard(2014)Bennett and Brassard]{bennett2014quantum}
Charles~H Bennett and Gilles Brassard.
\newblock Quantum cryptography: Public key distribution and coin tossing.
\newblock \emph{Theoretical computer science}, 560:\penalty0 7--11, 2014.

\bibitem[Bennett et~al.(1992)Bennett, Bessette, Brassard, Salvail, and Smolin]{bennett1992experimental}
Charles~H Bennett, Fran{\c{c}}ois Bessette, Gilles Brassard, Louis Salvail, and John Smolin.
\newblock Experimental quantum cryptography.
\newblock \emph{Journal of cryptology}, 5:\penalty0 3--28, 1992.

\bibitem[Bennett et~al.(1993)Bennett, Brassard, Cr{\'e}peau, Jozsa, Peres, and Wootters]{bennett1993teleporting}
Charles~H Bennett, Gilles Brassard, Claude Cr{\'e}peau, Richard Jozsa, Asher Peres, and William~K Wootters.
\newblock Teleporting an unknown quantum state via dual classical and einstein-podolsky-rosen channels.
\newblock \emph{Physical review letters}, 70\penalty0 (13):\penalty0 1895, 1993.

\bibitem[Bernstein(1924)]{bernstein1924modification}
Sergei Bernstein.
\newblock On a modification of chebyshev’s inequality and of the error formula of laplace.
\newblock \emph{Ann. Sci. Inst. Sav. Ukraine, Sect. Math}, 1\penalty0 (4):\penalty0 38--49, 1924.

\bibitem[Bhardwaj(2022)]{ashish_bhardwaj_2022}
Ashish Bhardwaj.
\newblock Framingham heart study dataset, 2022.
\newblock URL \url{https://www.kaggle.com/dsv/3493583}.

\bibitem[Bharti et~al.(2022)Bharti, Cervera-Lierta, Kyaw, Haug, Alperin-Lea, Anand, Degroote, Heimonen, Kottmann, Menke, et~al.]{bharti2022noisy}
Kishor Bharti, Alba Cervera-Lierta, Thi~Ha Kyaw, Tobias Haug, Sumner Alperin-Lea, Abhinav Anand, Matthias Degroote, Hermanni Heimonen, Jakob~S Kottmann, Tim Menke, et~al.
\newblock Noisy intermediate-scale quantum algorithms.
\newblock \emph{Reviews of Modern Physics}, 94\penalty0 (1):\penalty0 015004, 2022.

\bibitem[Biamonte et~al.(2017)Biamonte, Wittek, Pancotti, Rebentrost, Wiebe, and Lloyd]{biamonte2017quantum}
Jacob Biamonte, Peter Wittek, Nicola Pancotti, Patrick Rebentrost, Nathan Wiebe, and Seth Lloyd.
\newblock Quantum machine learning.
\newblock \emph{Nature}, 549\penalty0 (7671):\penalty0 195--202, 2017.

\bibitem[Boser et~al.(1992)Boser, Guyon, and Vapnik]{boser1992training}
Bernhard~E Boser, Isabelle~M Guyon, and Vladimir~N Vapnik.
\newblock A training algorithm for optimal margin classifiers.
\newblock In \emph{Proceedings of the fifth annual workshop on Computational learning theory}, pp.\  144--152, 1992.

\bibitem[Bouwmeester et~al.(1997)Bouwmeester, Pan, Mattle, Eibl, Weinfurter, and Zeilinger]{bouwmeester1997experimental}
Dik Bouwmeester, Jian-Wei Pan, Klaus Mattle, Manfred Eibl, Harald Weinfurter, and Anton Zeilinger.
\newblock Experimental quantum teleportation.
\newblock \emph{Nature}, 390\penalty0 (6660):\penalty0 575--579, 1997.

\bibitem[Broomhead \& Lowe(1988)Broomhead and Lowe]{Broomhead1988RadialBF}
David~S. Broomhead and David Lowe.
\newblock Radial basis functions, multi-variable functional interpolation and adaptive networks.
\newblock 1988.
\newblock URL \url{https://api.semanticscholar.org/CorpusID:117200472}.

\bibitem[Buhrman et~al.(2001)Buhrman, Cleve, Watrous, and De~Wolf]{buhrman2001quantum}
Harry Buhrman, Richard Cleve, John Watrous, and Ronald De~Wolf.
\newblock Quantum fingerprinting.
\newblock \emph{Physical review letters}, 87\penalty0 (16):\penalty0 167902, 2001.

\bibitem[Cerezo et~al.(2021)Cerezo, Arrasmith, Babbush, Benjamin, Endo, Fujii, McClean, Mitarai, Yuan, Cincio, et~al.]{cerezo2021variational}
Marco Cerezo, Andrew Arrasmith, Ryan Babbush, Simon~C Benjamin, Suguru Endo, Keisuke Fujii, Jarrod~R McClean, Kosuke Mitarai, Xiao Yuan, Lukasz Cincio, et~al.
\newblock Variational quantum algorithms.
\newblock \emph{Nature Reviews Physics}, 3\penalty0 (9):\penalty0 625--644, 2021.

\bibitem[Dang et~al.(2018)Dang, Jiang, Hu, Ji, and Zhang]{dang2018image}
Yijie Dang, Nan Jiang, Hao Hu, Zhuoxiao Ji, and Wenyin Zhang.
\newblock Image classification based on quantum k-nearest-neighbor algorithm.
\newblock \emph{Quantum Information Processing}, 17:\penalty0 1--18, 2018.

\bibitem[Deng et~al.(2003)Deng, Long, and Liu]{deng2003two}
Fu-Guo Deng, Gui~Lu Long, and Xiao-Shu Liu.
\newblock Two-step quantum direct communication protocol using the einstein-podolsky-rosen pair block.
\newblock \emph{Physical Review A}, 68\penalty0 (4):\penalty0 042317, 2003.

\bibitem[Ding et~al.(2021)Ding, Liu, Yang, and Cao]{ding2021random}
Xiaojian Ding, Jian Liu, Fan Yang, and Jie Cao.
\newblock Random radial basis function kernel-based support vector machine.
\newblock \emph{Journal of the Franklin Institute}, 358\penalty0 (18):\penalty0 10121--10140, 2021.

\bibitem[Fan \& Vercauteren(2012)Fan and Vercauteren]{fan2012somewhat}
Junfeng Fan and Frederik Vercauteren.
\newblock Somewhat practical fully homomorphic encryption.
\newblock \emph{Cryptology ePrint Archive}, 2012.

\bibitem[Fredkin \& Toffoli(1982)Fredkin and Toffoli]{fredkin1982conservative}
Edward Fredkin and Tommaso Toffoli.
\newblock Conservative logic.
\newblock \emph{International Journal of theoretical physics}, 21\penalty0 (3):\penalty0 219--253, 1982.

\bibitem[Garg \& Ramakrishnan(2020)Garg and Ramakrishnan]{garg2020advances}
Siddhant Garg and Goutham Ramakrishnan.
\newblock Advances in quantum deep learning: An overview.
\newblock \emph{arXiv preprint arXiv:2005.04316}, 2020.

\bibitem[Hannemann et~al.(2023)Hannemann, {\"U}nal, Swaminathan, Buchmann, and Akg{\"u}n]{hannemann2023privacy}
Anika Hannemann, Ali~Burak {\"U}nal, Arjhun Swaminathan, Erik Buchmann, and Mete Akg{\"u}n.
\newblock A privacy-preserving framework for collaborative machine learning with kernel methods.
\newblock In \emph{2023 5th IEEE International Conference on Trust, Privacy and Security in Intelligent Systems and Applications (TPS-ISA)}, pp.\  82--90. IEEE, 2023.

\bibitem[Hannemann et~al.(2024)Hannemann, Swaminathan, {\"U}nal, and Akg{\"u}n]{hannemann2024private}
Anika Hannemann, Arjhun Swaminathan, Ali~Burak {\"U}nal, and Mete Akg{\"u}n.
\newblock Private, efficient and scalable kernel learning for medical image analysis.
\newblock \emph{Proceedings of the 19th conference on Computational Intelligence methods for Bioinformatics and Biostatistics}, 2024.
\newblock To appear.

\bibitem[Havl{\'\i}{\v{c}}ek et~al.(2019)Havl{\'\i}{\v{c}}ek, C{\'o}rcoles, Temme, Harrow, Kandala, Chow, and Gambetta]{havlivcek2019supervised}
Vojt{\v{e}}ch Havl{\'\i}{\v{c}}ek, Antonio~D C{\'o}rcoles, Kristan Temme, Aram~W Harrow, Abhinav Kandala, Jerry~M Chow, and Jay~M Gambetta.
\newblock Supervised learning with quantum-enhanced feature spaces.
\newblock \emph{Nature}, 567\penalty0 (7747):\penalty0 209--212, 2019.

\bibitem[Jozsa \& Linden(2003)Jozsa and Linden]{jozsa2003role}
Richard Jozsa and Noah Linden.
\newblock On the role of entanglement in quantum-computational speed-up.
\newblock \emph{Proceedings of the Royal Society of London. Series A: Mathematical, Physical and Engineering Sciences}, 459\penalty0 (2036):\penalty0 2011--2032, 2003.

\bibitem[Kwak et~al.(2023)Kwak, Yun, Kim, Cho, Park, Choi, Jung, and Kim]{kwak2023quantum}
Yunseok Kwak, Won~Joon Yun, Jae~Pyoung Kim, Hyunhee Cho, Jihong Park, Minseok Choi, Soyi Jung, and Joongheon Kim.
\newblock Quantum distributed deep learning architectures: Models, discussions, and applications.
\newblock \emph{ICT Express}, 9\penalty0 (3):\penalty0 486--491, 2023.

\bibitem[Lloyd et~al.(2013)Lloyd, Mohseni, and Rebentrost]{lloyd2013quantum}
Seth Lloyd, Masoud Mohseni, and Patrick Rebentrost.
\newblock Quantum algorithms for supervised and unsupervised machine learning.
\newblock \emph{arXiv preprint arXiv:1307.0411}, 2013.

\bibitem[Lloyd et~al.(2014)Lloyd, Mohseni, and Rebentrost]{lloyd2014quantum}
Seth Lloyd, Masoud Mohseni, and Patrick Rebentrost.
\newblock Quantum principal component analysis.
\newblock \emph{Nature physics}, 10\penalty0 (9):\penalty0 631--633, 2014.

\bibitem[Long \& Liu(2002)Long and Liu]{long2002theoretically}
Gui-Lu Long and Xiao-Shu Liu.
\newblock Theoretically efficient high-capacity quantum-key-distribution scheme.
\newblock \emph{Physical Review A}, 65\penalty0 (3):\penalty0 032302, 2002.

\bibitem[Montesinos-L{\'o}pez et~al.(2021)Montesinos-L{\'o}pez, Montesinos-L{\'o}pez, Montesinos-L{\'o}pez, Flores-Cortes, de~la Rosa, and Crossa]{montesinos2021guide}
Abelardo Montesinos-L{\'o}pez, Osval~Antonio Montesinos-L{\'o}pez, Jos{\'e}~Cricelio Montesinos-L{\'o}pez, Carlos~Alberto Flores-Cortes, Roberto de~la Rosa, and Jos{\'e} Crossa.
\newblock A guide for kernel generalized regression methods for genomic-enabled prediction.
\newblock \emph{Heredity}, 126\penalty0 (4):\penalty0 577--596, 2021.

\bibitem[Morocho-Cayamcela et~al.(2019)Morocho-Cayamcela, Lee, and Lim]{morocho2019machine}
Manuel~Eugenio Morocho-Cayamcela, Haeyoung Lee, and Wansu Lim.
\newblock Machine learning for 5g/b5g mobile and wireless communications: Potential, limitations, and future directions.
\newblock \emph{IEEE access}, 7:\penalty0 137184--137206, 2019.

\bibitem[Nolan(2012)]{nolan2012stable}
John~P Nolan.
\newblock \emph{Stable distributions}.
\newblock 2012.

\bibitem[Pedregosa et~al.(2011)Pedregosa, Varoquaux, Gramfort, Michel, Thirion, Grisel, Blondel, Prettenhofer, Weiss, Dubourg, et~al.]{pedregosa2011scikit}
Fabian Pedregosa, Ga{\"e}l Varoquaux, Alexandre Gramfort, Vincent Michel, Bertrand Thirion, Olivier Grisel, Mathieu Blondel, Peter Prettenhofer, Ron Weiss, Vincent Dubourg, et~al.
\newblock Scikit-learn: Machine learning in python.
\newblock \emph{the Journal of machine Learning research}, 12:\penalty0 2825--2830, 2011.

\bibitem[Ponte \& Melko(2017)Ponte and Melko]{ponte2017kernel}
Pedro Ponte and Roger~G Melko.
\newblock Kernel methods for interpretable machine learning of order parameters.
\newblock \emph{Physical Review B}, 96\penalty0 (20):\penalty0 205146, 2017.

\bibitem[Rahimi \& Recht(2007)Rahimi and Recht]{rahimi2007random}
Ali Rahimi and Benjamin Recht.
\newblock Random features for large-scale kernel machines.
\newblock \emph{Advances in neural information processing systems}, 20, 2007.

\bibitem[Rebentrost et~al.(2014)Rebentrost, Mohseni, and Lloyd]{rebentrost2014quantum}
Patrick Rebentrost, Masoud Mohseni, and Seth Lloyd.
\newblock Quantum support vector machine for big data classification.
\newblock \emph{Physical review letters}, 113\penalty0 (13):\penalty0 130503, 2014.

\bibitem[Sakar et~al.(2019)Sakar, Serbes, Gunduz, Tunc, Nizam, Sakar, Tutuncu, Aydin, Isenkul, and Apaydin]{sakar2019comparative}
C~Okan Sakar, Gorkem Serbes, Aysegul Gunduz, Hunkar~C Tunc, Hatice Nizam, Betul~Erdogdu Sakar, Melih Tutuncu, Tarkan Aydin, M~Erdem Isenkul, and Hulya Apaydin.
\newblock A comparative analysis of speech signal processing algorithms for parkinson’s disease classification and the use of the tunable q-factor wavelet transform.
\newblock \emph{Applied Soft Computing}, 74:\penalty0 255--263, 2019.

\bibitem[Sch{\"o}lkopf \& Smola(2002)Sch{\"o}lkopf and Smola]{scholkopf2002learning}
Bernhard Sch{\"o}lkopf and Alexander~J Smola.
\newblock \emph{Learning with kernels: support vector machines, regularization, optimization, and beyond}.
\newblock MIT press, 2002.

\bibitem[Schuld(2021)]{schuld2021supervised}
Maria Schuld.
\newblock Supervised quantum machine learning models are kernel methods.
\newblock \emph{arXiv preprint arXiv:2101.11020}, 2021.

\bibitem[Schuld \& Killoran(2019)Schuld and Killoran]{schuld2019quantum}
Maria Schuld and Nathan Killoran.
\newblock Quantum machine learning in feature hilbert spaces.
\newblock \emph{Physical review letters}, 122\penalty0 (4):\penalty0 040504, 2019.

\bibitem[Schuld \& Petruccione(2018)Schuld and Petruccione]{schuld2018supervised}
Maria Schuld and Francesco Petruccione.
\newblock \emph{Supervised learning with quantum computers}, volume~17.
\newblock Springer, 2018.

\bibitem[Schuld et~al.(2015)Schuld, Sinayskiy, and Petruccione]{schuld2015introduction}
Maria Schuld, Ilya Sinayskiy, and Francesco Petruccione.
\newblock An introduction to quantum machine learning.
\newblock \emph{Contemporary Physics}, 56\penalty0 (2):\penalty0 172--185, 2015.

\bibitem[Shawe-Taylor(2004)]{shawe2004kernel}
J~Shawe-Taylor.
\newblock Kernel methods for pattern analysis.
\newblock \emph{Cambridge University Press google schola}, 2:\penalty0 181--201, 2004.

\bibitem[Sheng \& Zhou(2017)Sheng and Zhou]{sheng2017distributed}
Yu-Bo Sheng and Lan Zhou.
\newblock Distributed secure quantum machine learning.
\newblock \emph{Science Bulletin}, 62\penalty0 (14):\penalty0 1025--1029, 2017.

\bibitem[Sheng et~al.(2022)Sheng, Zhou, and Long]{sheng2022one}
Yu-Bo Sheng, Lan Zhou, and Gui-Lu Long.
\newblock One-step quantum secure direct communication.
\newblock \emph{Science Bulletin}, 67\penalty0 (4):\penalty0 367--374, 2022.

\bibitem[Smola \& Kondor(2003)Smola and Kondor]{smola2003kernels}
Alexander~J Smola and Risi Kondor.
\newblock Kernels and regularization on graphs.
\newblock In \emph{Learning Theory and Kernel Machines: 16th Annual Conference on Learning Theory and 7th Kernel Workshop, COLT/Kernel 2003, Washington, DC, USA, August 24-27, 2003. Proceedings}, pp.\  144--158. Springer, 2003.

\bibitem[Stirling(1730)]{stirling1730methodus}
James Stirling.
\newblock \emph{Methodus differentialis: sive tractatus de summatione et interpolatione serierum infinitarum}.
\newblock 1730.

\bibitem[Swaminathan et~al.(2024)Swaminathan, Hannemann, {\"U}nal, Pfeifer, and Akg{\"u}n]{swaminathan2024pp}
Arjhun Swaminathan, Anika Hannemann, Ali~Burak {\"U}nal, Nico Pfeifer, and Mete Akg{\"u}n.
\newblock Pp-gwas: Privacy preserving multi-site genome-wide association studies.
\newblock \emph{arXiv preprint arXiv:2410.08122}, 2024.

\bibitem[Wackerly et~al.(2008)Wackerly, Mendenhall, and Scheaffer]{wackerly2008mathematical}
Dennis~D Wackerly, William Mendenhall, and Richard~L Scheaffer.
\newblock \emph{Mathematical statistics with applications}, volume~7.
\newblock Thomson Brooks/Cole Belmont, CA, 2008.

\bibitem[Wang et~al.(2005)Wang, Deng, Li, Liu, and Long]{wang2005quantum}
Chuan Wang, Fu-Guo Deng, Yan-Song Li, Xiao-Shu Liu, and Gui~Lu Long.
\newblock Quantum secure direct communication with high-dimension quantum superdense coding.
\newblock \emph{Physical Review A}, 71\penalty0 (4):\penalty0 044305, 2005.

\bibitem[Wille et~al.(2019)Wille, Van~Meter, and Naveh]{wille2019ibm}
Robert Wille, Rod Van~Meter, and Yehuda Naveh.
\newblock Ibm’s qiskit tool chain: Working with and developing for real quantum computers.
\newblock In \emph{2019 Design, Automation \& Test in Europe Conference \& Exhibition (DATE)}, pp.\  1234--1240. IEEE, 2019.

\bibitem[Wootters \& Zurek(1982)Wootters and Zurek]{wootters1982single}
William~K Wootters and Wojciech~H Zurek.
\newblock A single quantum cannot be cloned.
\newblock \emph{Nature}, 299\penalty0 (5886):\penalty0 802--803, 1982.

\bibitem[Yao(1982)]{yao1982protocols}
Andrew~C Yao.
\newblock Protocols for secure computations.
\newblock In \emph{23rd annual symposium on foundations of computer science (sfcs 1982)}, pp.\  160--164. IEEE, 1982.

\bibitem[Yu et~al.(2006)Yu, Jiang, and Vaidya]{yu2006privacy}
Hwanjo Yu, Xiaoqian Jiang, and Jaideep Vaidya.
\newblock Privacy-preserving svm using nonlinear kernels on horizontally partitioned data.
\newblock In \emph{Proceedings of the 2006 ACM symposium on Applied computing}, pp.\  603--610, 2006.

\bibitem[Zhao et~al.(2021)Zhao, Zhao, Rebentrost, and Fitzsimons]{zhao2021compiling}
Liming Zhao, Zhikuan Zhao, Patrick Rebentrost, and Joseph Fitzsimons.
\newblock Compiling basic linear algebra subroutines for quantum computers.
\newblock \emph{Quantum Machine Intelligence}, 3\penalty0 (2):\penalty0 21, 2021.

\end{thebibliography}
\bibliographystyle{tmlr}

\appendix
\section{Correctness of Architecture} \label{sec:appendixa}
In this section, we provide a theoretical proof of correctness for our architecture, demonstrating that it accurately computes the kernel matrix for given input data. Without loss of generality, consider an $n$-qubit system, where Alice's encoded data is represented by $\ket{\psi}_A$ and Bob's data by $\ket{\psi}_B$. As illustrated in Figure \ref{fig:arc}, the Helper initializes the system by preparing $2n$ Bell states. The initial state consists of $\ket{0}^{\otimes n}_{HA}$, $\ket{0}^{\otimes n}_{SA}$, $\ket{0}^{\otimes n}_{SB}$, and $\ket{0}^{\otimes n}_{HB}$, where the superscript denotes the qubits in each subsystem, e.g., the $i$-th qubit of Alice's data is represented as $\ket{\psi}_A^i$.

Let $\ket{\psi}$ be written as $\alpha\ket{0} + \beta\ket{1}$ and $\ket{\phi}$ as $\delta\ket{0} + \gamma\ket{1}$ in the computational basis. Initially, the entire system is in the state:
\begin{equation*}
\ket{\psi}_A^{\otimes n} \otimes \ket{0}^{\otimes n}_{HA} \otimes \ket{0}^{\otimes n}_{SA} \otimes \ket{0}^{\otimes n}_{SB} \otimes \ket{0}^{\otimes n}_{HB} \otimes \ket{\phi}_B^{\otimes n}.
\end{equation*}
For simplicity, we track only the $i$-th qubit of each $n$-qubit state:
\begin{equation*}
\ket{\psi}_A^i \otimes \ket{0}^{i}_{HA} \otimes \ket{0}^{i}_{SA} \otimes \ket{0}^{i}_{SB} \otimes \ket{0}^{i}_{HB} \otimes \ket{\phi}_B^i.
\end{equation*}

After applying Hadamard gates to the Helper qubits, the system evolves to the following state:
\begin{equation*}
\ket{\psi}^i_A \otimes \frac{1}{\sqrt{2}}\left(\ket{0}^i_{HA} + \ket{1}^i_{HA}\right)\otimes \ket{0}^i_{SA} \otimes \ket{0}^i_{SB} \otimes \frac{1}{\sqrt{2}}\left(\ket{0}^i_{HB}+\ket{1}_{HB}^i\right) \otimes \ket{\phi}^i_B. 
\end{equation*}

Upon applying the Controlled-X gates, the system is entangled, preparing it for quantum teleportation:
\begin{equation*}
\frac{1}{2}\left(\ket{\psi}^i_A \otimes \left(\ket{00}^i_{HA,SA} + \ket{11}^i_{HA,SA}\right) \otimes \left(\ket{00}^i_{SB,HB}+\ket{11}^i_{SB,HB}\right) \otimes \ket{\phi}^i_B\right). 
\end{equation*}

Next, Alice and Bob perform Controlled-X gates, resulting in the state:
\begin{gather*}
\frac{1}{2}\left(\alpha\ket{000}^i_{A,HA,SA} + \beta\ket{110}^i_{A,HA,SA} + \alpha\ket{011}^i_{A,HA,SA}+\beta \ket{101}^i_{A,HA,SA}\right) \\ \otimes \frac{1}{2} \left(\gamma\ket{000}^i_{SB,HB,B}+\delta\ket{011}^i_{SB,HB,B}+\gamma \ket{110}^i_{SB,HB,B} + \delta \ket{101}^i_{SB,HB,B}\right) . 
\end{gather*}

After applying Hadamard gates, the system evolves to:
\begin{gather*}
\frac{1}{4}\left(\alpha\ket{000}^i_{A,HA,SA} + \alpha\ket{100}^i_{A,HA,SA} + \beta\ket{010}^i_{A,HA,SA} - \beta\ket{110}^i_{A,HA,SA}\right.  \\ \left. \quad + \alpha\ket{011}^i_{A,HA,SA} + \alpha\ket{111}^i_{A,HA,SA} + \beta \ket{001}^i_{A,HA,SA} - \beta \ket{101}^i_{A,HA,SA}\right) \\ 
\otimes \frac{1}{4} \left(\gamma\ket{000}^i_{SB,HB,B}+\gamma\ket{001}^i_{SB,HB,B}-\delta\ket{011}^i_{SB,HB,B}+\delta\ket{010}^i_{SB,HB,B} \right.  \\ \left. \quad +\gamma \ket{110}^i_{SB,HB,B}+ \gamma \ket{111}^i_{SB,HB,B} - \delta \ket{101}^i_{SB,HB,B}+\delta \ket{100}^i_{SB,HB,B}\right).
\end{gather*}

Once the classical bits are communicated to the server, the server applies the appropriate X and Z gates, resulting in the system state:
\begin{equation*}
\ket{0}_{a}(\alpha\ket{0}_{SA}+\beta\ket{1}_{SA}) \otimes (\gamma\ket{0}_{SB}+\delta\ket{1}_{SB}).    
\end{equation*}

After applying a Hadamard gate, the server obtains:
\begin{equation*}
\frac{1}{\sqrt{2}}(\ket{0}_{a}+\ket{1}_{a})(\ket{\psi}_{SA}) \otimes (\ket{\phi}_{SB}).
\end{equation*}

The final step involves applying Fredkin gates, with the ancilla qubit as the control:
\begin{equation*}
\frac{1}{\sqrt{2}}\left(\ket{0\psi\phi}_{a,SA,SB} + \ket{1\phi\psi}_{a,SA,SB}\right).
\end{equation*}

Upon applying a Hadamard gate to the ancilla qubit, we obtain:
\begin{equation*} 
\frac{1}{2}\left(\ket{0}_{a}\otimes (\ket{\psi\phi}_{SA,SB}+\ket{\phi\psi}_{SA,SB})+\ket{1}_{a}\otimes (\ket{\psi\phi}_{SA,SB}-\ket{\phi\psi}_{SA,SB})\right).
\end{equation*}

Measuring the ancilla qubit along the computational basis yields:
\begin{equation*}
    Pr(0)_a=\frac{1}{4}\left(\bra{\psi}\bra{\phi}+\bra{\phi}\bra{\psi}\right)\left(\ket{\psi}\ket{\phi}+\ket{\phi}\ket{\psi}\right)=\frac{1}{2}+\frac{1}{2}\lVert \braket{\psi}{\phi} \rVert^2,
\end{equation*}
where, $P(0)_a$ is the probability that the anscilla qubit is in the $\ket{0}$ state. Rearranging, this then determines the inner product of $\ket{\psi}$ and $\ket{\phi}$.
\begin{equation}\label{probab}
    \lVert \braket{\psi}{\phi} \rVert = \sqrt{2 Pr(0)_a-1},
\end{equation}

\section{Error Analysis and Measurement Complexity} \label{sec:appendixb}
\subsection{Shot Complexity for Inner Product Estimation}\label{sec:appendixb1}
In this section, we derive the relationship between the number of measurement shots, $M$, and the additive error $\epsilon$ in estimating the inner product of quantum states.

As discussed earlier in (\ref{probab}), the probability of obtaining the outcome $|0\rangle$ in the inner product measurement is
$$p = \frac{1+\lVert \langle \psi | \phi \rangle\rVert^2}{2}.$$
For each shot, we define the Bernoulli random variable $X_i$ by
\begin{equation*}
X_i=\begin{cases}
1, & \text{if the outcome is } |0\rangle, \\
0, & \text{if the outcome is } |1\rangle.
\end{cases}
\end{equation*}
We know $\mathbb{E}[X_i] = p $ and $\operatorname{Var}(X_i) = p(1-p).$
After $M$ independent shots, the sample mean is
$\bar{X} =  \sum_{i=1}^M X_i/M,$
with $\mathbb{E}[\bar{X}] = p$ and $\operatorname{Var}(\bar{X}) = (p(1-p)/M).$ Labeling $(1+\lVert\langle \psi | \phi \rangle\rVert)^2/2=l$, an estimator for $l$ is $\hat{l} = 2\bar{X} - 1.$
Its variance is $\operatorname{Var}(\hat{l}) = 4\,\operatorname{Var}(\bar{X}) = (4p(1-p))/M),$ and the standard deviation is
$$\sigma_{\hat{c}} = 2\sqrt{\frac{p(1-p)}{M}}.$$
The standard deviation attains its maximum at $p = 1/2$. In that case,
$p(1-p) = 1/4,$
and hence $\sigma_{\hat{c}} \le 1/\sqrt{M}.$
To achieve an additive error $\epsilon$ in estimating $l$, we thus require
$1/\sqrt{M} \le \epsilon,$
which implies $$M =\mathcal{O}\left( \frac{1}{\epsilon^2}\right).$$

\subsection{Determining Number of Qubits for the RBF Kernel}\label{sec:appendixb2}

In this section, we derive the relationship between the number of qubits required, given by
$\lceil \log_2(2D) \rceil,$ and the additive error $\epsilon$ in approximating the RBF kernel. We consider the quantum feature map corresponding to the RBF kernel defined in (\ref{rbf}). 
$$ x\in\mathbb{R}^d \mapsto  \psi(x) = \frac{1}{\sqrt{D}} \sum_{j=1}^{D} \left( \cos(w_j^T x)\ket{2j-2} + \sin(w_j^T x)\ket{2j-1} \right).$$
Here, $w_j$ are drawn independently from the multivariate normal distribution $\mathcal{N}(0, \sigma^{-2} I)$. The associated kernel approximation is then given by
$$
\hat{K}(x,y) = \langle \psi(x), \psi(y) \rangle = \frac{1}{D}\sum_{j=1}^{D} \cos\left(w_j^T(x-y)\right).
$$
As shown in (\ref{rbfresult}), the expectation of this approximation is
$$
\mathbb{E}\left[\hat{K}(x,y)\right] = \exp\left(-\frac{1}{2\sigma^2}\|x-y\|^2\right) = K_{\mathrm{RBF}}(x,y).
$$

Define for each $j=1,\dots,D,$ the random variables $X_j = \cos \left(w_j^T(x-y)\right).$
Then, the kernel approximation can be written as $\hat{K}(x,y) = \sum_{j=1}^{D} X_j/D.$
Next, we compute the variance of each $X_j$. Let
$Z_j = w_j^T(x-y).$ Because $w_j \sim \mathcal{N}(0, \sigma^{-2}I)$, it follows that $Z_j \sim \mathcal{N}\left(0, \|x-y\|^2/\sigma^2\right).$
Standard results for the cosine of a Gaussian random variable yield
$$
\mathbb{E}[\cos(Z_j)] = \exp\left(-\frac{\|x-y\|^2}{2\sigma^2}\right),
\quad 
\mathbb{E}[\cos^2(Z_j)] = \frac{1}{2}\left( 1 + \exp\left(-\frac{\|x-y\|^2}{\sigma^2}\right) \right).
$$
Thus, the variance of $X_j$ is
$$
v := \operatorname{Var}(X_j) = \mathbb{E}[\cos^2(Z_j)] - \left(\mathbb{E}[\cos(Z_j)]\right)^2  = \frac{1}{2}\left[1 - \exp\left(-\frac{\|x-y\|^2}{\sigma^2}\right)\right].
$$
Since the $X_j$ are i.i.d., the variance of the average $\hat{K}(x,y)$ is
$
\operatorname{Var}\left(\hat{K}(x,y)\right) = {v}/{D}.$

We then define the centered random variables $Y_j:=X_j - \mathbb{E}[X_j]$. Each $Y_j$ then has zero mean and variance $v$.
Since $\cos(\cdot)$ is bounded in $[-1,1]$, it follows that
$
|X_j| \le 1$ and $|X_j - \mathbb{E}[X_j]|=|Y_j| \le 2,
$. We then apply Bernstein’s inequality \citep{bernstein1924modification} to the sum $\sum_{j=1}^DY_j$, whose summands are bounded by $M$ (here $M=2$), and whose total variance is $\sigma^2 = vD$. We have for any $t > 0$,
$$
\Pr\left(\left|\sum_{j=1}^DY_j\right| \ge t\right) \le 2\exp\left(-\frac{t^2}{2vD + \frac{2}{3}Mt}\right).
$$
Setting $t=D\epsilon$, we see
$$
\Pr\left(\left|\hat{K}(x,y) - \mathbb{E}[\hat{K}(x,y)]\right| \ge \epsilon\right) \le 2\exp\left(-\frac{D\epsilon^2}{2v + \frac{4}{3}\epsilon}\right).
$$
For sufficiently small $\epsilon$ (i.e. when the term $\frac{4}{3}\epsilon$ is negligible relative to $2v$), this simplifies to
$$
\Pr\left(\left|\hat{K}(x,y) - K_{\mathrm{RBF}}(x,y)\right| \ge \epsilon\right) \le 2\exp\left(-\frac{D\epsilon^2}{2v}\right).
$$

To ensure that the additive error is bounded by $\epsilon$ with probability at least $1-\delta$, we require
$$
2\exp\left(-\frac{D\epsilon^2}{2v}\right) \le \delta.
$$
Taking natural logarithms and rearranging, we obtain
$$
D \ge \frac{2v}{\epsilon^2} \ln\left(\frac{2}{\delta}\right).
$$
Thus, to guarantee an additive error of at most $\epsilon$ with confidence $1-\delta$, the number of random features must satisfy
$$
D = \mathcal{O}\left(\frac{v}{\epsilon^2}\right).$$

For large $\lVert{x-y}\rVert^2/\sigma^2$, we have $v=1/2$, and for small $\lVert{x-y}\rVert^2/\sigma^2$, we can approximate the exponential term in $v$ with the first order of the Taylor expansion. Hence

\begin{equation}\label{rbf_graph_result}
D = \begin{cases}
\mathcal{O}\left(\frac{1}{\epsilon^2}\right), & \text{for large } \lVert{x-y}\rVert^2/\sigma^2,\\
\mathcal{O}\left(\frac{\lVert{x-y}\rVert^2}{\sigma^2\epsilon^2}\right), & \text{for small } \lVert{x-y}\rVert^2/\sigma^2.
\end{cases}    
\end{equation}

It is important to note that the constant 
$v$
depends on the ratio $\|x-y\|^2/\sigma^2$. For a fixed pair $(x,y)$, increasing $\sigma$ results in a smaller $v$. Consequently, for larger $\sigma$ (i.e., for a wider kernel), a smaller number of random features $D$ is required to achieve a given error tolerance $\epsilon$. Conversely, for smaller $\sigma$ (i.e., for a narrower kernel), the ratio $\|x-y\|^2/\sigma^2$ increases, leading to a larger $v$ and thus a larger $D$ is necessary. This can be seen in Figure \ref{fig:combined_kernel_errors}(a).

Moreover, when $\sigma$ becomes very small (for example, when $\sigma < 0.5$) or when $\|x-y\|^2$ is very large (i.e., $x,y \in \mathbb{R}^d$, and $d$ is very large), the exponential term in $v$ rapidly approaches zero, and $v$ asymptotically approaches its maximum value of $1/2$. In this regime, the relationship between $D$ and $\epsilon$ becomes independent of $\sigma$. Consequently, the error curves will converge as we observe in Figure \ref{fig:combined_kernel_errors}(c).

\subsection{Determining Number of Qubits for the Laplacian Kernel}\label{sec:appendixb3}

In this section, we derive the relationship between the number of qubits required, given by 
$
\lceil \log_2(2D) \rceil,
$
and the additive error $\epsilon$ in approximating the Laplacian kernel. We consider the quantum feature map corresponding to the Laplacian kernel defined in (\ref{laplacian}) as
$$
x\in\mathbb{R}^d \mapsto \psi(x) = \frac{1}{\sqrt{D}} \sum_{j=1}^{D} \left( \cos\left(w_j^T x + \alpha_j\right)\ket{2j-2} + \sin\left(w_j^T x + \alpha_j\right)\ket{2j-1} \right),
$$
where $w_j$ are drawn independently from the Cauchy distribution $\mathcal{C}(0,\alpha^{-1} I)$ and the phase shifts $\alpha_j$ are independent samples from the uniform distribution $\mathcal{U}(0,2\pi)$. The associated kernel approximation is then given by
$$
\hat{K}(x,y) = \langle \psi(x), \psi(y) \rangle = \frac{1}{D}\sum_{j=1}^{D} \cos\left(w_j^T(x-y)\right).
$$
As shown in (\ref{lapproof}), the expectation of this approximation is
$$
\mathbb{E}\left[\hat{K}(x,y)\right] = \exp\left(-\frac{\|x-y\|_1}{\alpha}\right) = K_{L}(x,y).
$$

As before, define for each $j=1,\dots,D,$ $X_j = \cos\left(w_j^T(x-y)\right)$.
Then, the kernel approximation can be written as $\hat{K}(x,y) = \sum_{j=1}^{D} X_j/D.$ Next, we compute the variance of each $X_j$. Let
$Z_j = w_j^T(x-y).$ Since $w_j$ is drawn from $\mathcal{C}(0,\alpha^{-1} I)$ and Cauchy distributions remain closed under linear transformations, we have
$$Z_j \sim \mathcal{C}\left(0,\gamma\right), \quad \gamma = \frac{\|x-y\|_1}{\alpha}.$$
The characteristic function of a Cauchy random variable $Z_j \sim \mathcal{C}(0,\gamma)$ is given by
$\phi_{Z_j}(t)= \exp\left(-\gamma|t|\right).$
Thus, setting $t=1$ we obtain
$$
\mathbb{E}[\cos(Z_j)] = \exp(-\gamma) = \exp\left(-\frac{\|x-y\|_1}{\alpha}\right).
$$
Moreover, using the trigonometric identity $\cos^2(Z_j)=(1+\cos(2Z_j))/2$, we have
$$
\mathbb{E}[\cos^2(Z_j)] = \frac{1+\exp(-2\gamma)}{2} = \frac{1+\exp\left(-\frac{2\|x-y\|_1}{\alpha}\right)}{2}.
$$
Thus, the variance of $X_j$ is
$$
v := \operatorname{Var}(X_j) = \mathbb{E}[\cos^2(Z_j)] - \left(\mathbb{E}[\cos(Z_j)]\right)^2 
= \frac{1-\exp\left(-\frac{2\|x-y\|_1}{\alpha}\right)}{2}.
$$

Now, following the same procedure as in \ref{sec:appendixb2} by defining $Y_j=X_j-\mathbb{E}[X_j]$ and applying Bernstein's inequality on $\sum_{j=1}^D Y_j$, we find that

\begin{equation*}
D = \begin{cases}
\mathcal{O}\left(\frac{1}{\epsilon^2}\right), & \text{for large } \|x-y\|_1/\alpha,\\
\mathcal{O}\left(\frac{\|x-y\|_1}{\alpha\epsilon^2}\right), & \text{for small } \|x-y\|_1/\alpha.
\end{cases}    
\end{equation*}

It is important to note that the constant $v$ depends on the ratio $\|x-y\|_1/\alpha$. For a fixed pair $(x,y)$, increasing $\alpha$ results in a small $v$. Consequently, for larger $\alpha$ (i.e., for a wider Laplacian kernel), the required $D$ to achieve a given error tolerance $\epsilon$ is smaller. Conversely, for a smaller $\alpha$ (i.e., for a narrower kernel), $v$ increases. Thus a larger $D$ is necessary. This can be seen in Figure \ref{fig:combined_kernel_errors}(b). Similar to before, when $\alpha$ becomes very small (for example, when $\alpha <1$) or when $\|x-y\|_1$ is very large (i.e., $x,y \in \mathbb{R}^d$, and $d$ is very large), the exponential term steadily approaches zero, and $v$ asymptotically approaches its maximum value of $1/2$. In this regime, the relationship between $D$ and $\epsilon$ becomes independent of $\alpha$ as we observe in Figure \ref{fig:combined_kernel_errors}(d).

\subsection{Classical Simulation of Kernel Approximation Error}
In this section, we describe experiments run on a classical computer to validate the theoretical claims made in Sections \ref{sec:appendixb2} and \ref{sec:appendixb3}. Our goal is to assess the quality of the kernel approximations obtained via our feature maps for both the RBF and Laplacian kernels. To quantify the approximation quality, we compute the relative Frobenius norm error between the estimated kernel matrix and the exact kernel matrix. The error metric is defined as
$$\frac{\lVert K_{\text{exact}} - K_{\text{approx}}\rVert_F}{\lVert K_{\text{exact}}\rVert_F},$$
where $\lVert \cdot \rVert_F$ is the Frobenius norm.

\begin{figure}[htb]
    \centering
    \begin{minipage}[t]{0.48\textwidth}
        \centering
        \includegraphics[width=\textwidth]{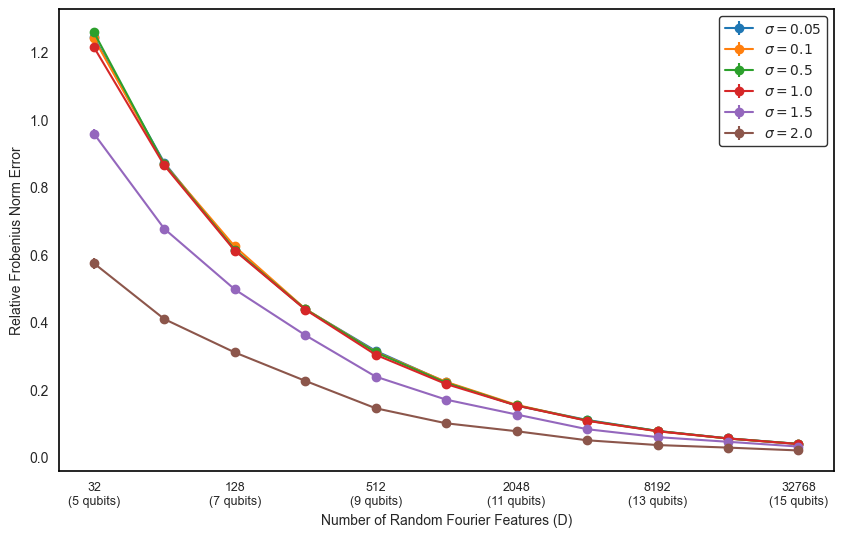}
        \caption*{(a) RBF Kernel, $d=10$. Different curves correspond to various values of $\sigma$, with the x-axis showing the number of random features $D$ (and corresponding qubits, $\lceil \log_2(2D) \rceil$).}
    \end{minipage}\hfill
    \begin{minipage}[t]{0.48\textwidth}
        \centering
        \includegraphics[width=\textwidth]{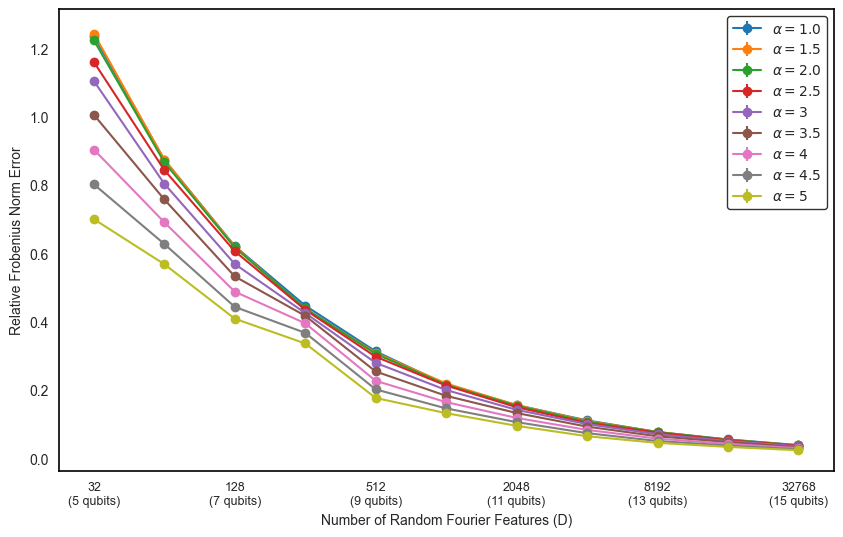}
        \caption*{(b) Laplacian Kernel, $d=10$. Different curves correspond to various values of $\alpha$, with the x-axis showing the number of random features $D$ (and corresponding qubits, $\lceil \log_2(2D) \rceil$).}
    \end{minipage}
    
    \vspace{0.8cm}
    
    \begin{minipage}[t]{0.48\textwidth}
        \centering
        \includegraphics[width=\textwidth]{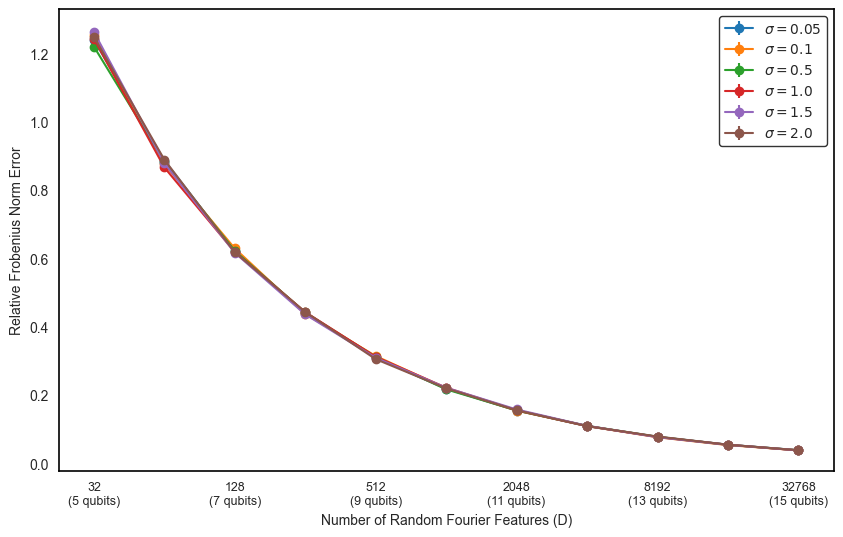}
        \caption*{(c) RBF Kernel, $d=100,000$. In this regime the $\|x-y\|^2$ term dominates, causing the variance to approach $1/2$ and all curves to coincide.}
    \end{minipage}\hfill
    \begin{minipage}[t]{0.48\textwidth}
        \centering
        \includegraphics[width=\textwidth]{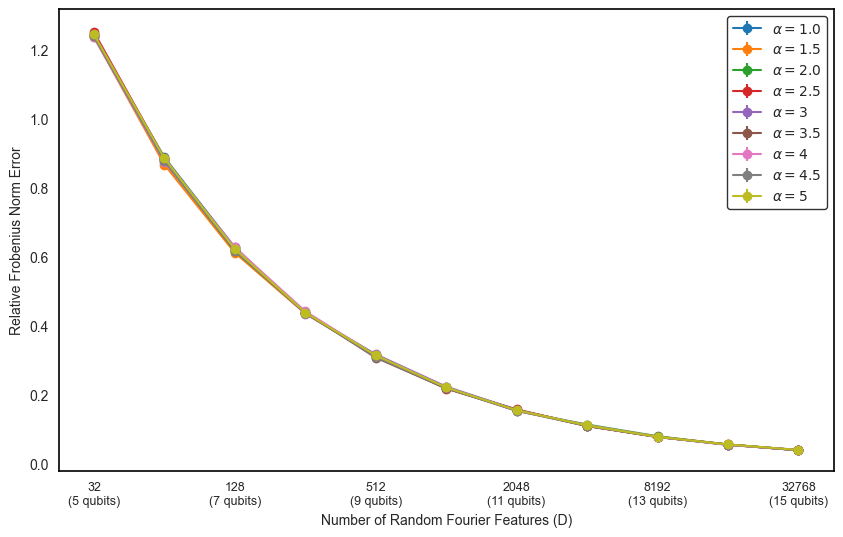}
        \caption*{(d) Laplacian Kernel, $d=100,000$. Again, the $\|x-y\|_1$ term dominates, causing the variance to approach $1/2$ and all curves to coincide.}
    \end{minipage}
    
    \caption{Kernel approximation error curves for the RBF and Laplacian kernels under two data regimes. All experiments were performed on simulated datasets with 100 samples, and each experiment was repeated 5 times with the errors reported as averages. \textbf{Top:} Experiments with a small feature dimension ($d=8$) show clear dependence on $\sigma$ (RBF) or $\alpha$ (Laplacian). \textbf{Bottom:} Experiments with a large feature dimension ($d=100,000$) exhibit nearly overlapping curves regardless of kernel parameters, as the large $\|x-y\|$ values force the variance to saturate at $1/2$, resulting in $D=\mathcal{O}(1/\epsilon^2)$.}
    \label{fig:combined_kernel_errors}
\end{figure}

Our empirical results as seen in Figure \ref{fig:combined_kernel_errors} confirm that the kernel approximation error decreases as the number of random features $D$ increases. In both the RBF and Laplacian kernel experiments, the observed decay rate aligns with the theoretical $\mathcal{O}(1/\sqrt{D})$ behavior predicted earlier. These findings, albeit using classical resources provide strong evidence that the proposed feature maps produce reliable approximations to exact kernel matrices.

\section{Broader Impact Statement and Ethical Concerns}
Our work focuses on computing commonly used kernels in machine learning using quantum computing. As a piece of fundamental research, our contribution is theoretical in nature and does not, in itself, pose any direct negative societal impacts or ethical concerns. The methods and datasets involved do not contain sensitive personal data, nor do they target or adversely affect any vulnerable populations.

While it is acknowledged that kernel-based machine learning methods can, in certain applications, be associated with issues such as bias amplification or limited interpretability, especially when applied to non-curated datasets or in safety-critical systems, our work builds upon established techniques without introducing fundamentally new methods that would exacerbate these concerns. We recognize that downstream applications of kernel methods may encounter ethical challenges. However, such considerations are intrinsic to the broader field rather than a consequence of our specific contribution.

Given the theoretical scope of this work, no further mitigation strategies are necessary. Nevertheless, we urge practitioners to consider the ethical implications in any concrete application of these methods.

\end{document}